\tikzset{commutative diagrams/arrow style=math font}
\def\M{\mathcal{M}}
\def\n3a{t}
\newcommand\field[1]{{\ensuremath{\mathbb{{#1}}}}}
\newcommand{\MM}{\mathcal{M}}
\newcommand{\p}{\partial}
\newcommand{\HH}{\mathcal{H}}
\def\ov{\over}
\def\vev#1{\langle#1\rangle}
\def\eq#1{(\ref{#1})}
\def\Om{{\Omega}}
\def \th{{\theta}}
\def\a{{\alpha}}
\def\b{{\beta}}
\def \om {\omega}
\def \ra {\rightarrow}
\def\sig{{\sigma}}
\def\l0{{\lambda_o}}
\def\bz{{\bar{z}}}
\def\l{{\lambda}}
\def\lb{{\bar\lambda}}
\def\inta{{\int_\alpha}}
\def\intb{{\int_\beta}}
\def\intal{{\int_\alpha \l}}
\def\intbl{{\int_\beta \l}}
\def\intalb{{\int_\alpha \lb}}
\def\intblb{{\int_\beta \lb}}
\def\BB{{\cal B}}
\def\FF{{\cal F}}
\newcommand{\be}{\begin{equation}}
\newcommand{\ee}{\end{equation}}
\newcommand{\bea}{\begin{eqnarray}}
\newcommand{\eea}{\end{eqnarray}}
\newcommand{\bln}{\begin{align}}
\newcommand{\eln}{\end{align}}
\newcommand{\bst}{\begin{split}}
\newcommand{\est}{\end{split}}
\newcommand{\bi}{\begin{itemize}}
\newcommand{\ei}{\end{itemize}}
\newcommand{\ben}{\begin{enumerate}}
\newcommand{\een}{\end{enumerate}}
\newtheorem*{theorem*}{Theorem}
\newtheorem*{lemma*}{Lemma}
\newtheorem{corollary}[equation]{Corollary}
\newtheorem{proposition}[equation]{Proposition}
\newcommand{\ZZ}{\mathbb{Z}}
\newcommand{\CC}{\mathbb{C}}
\newcommand{\QQ}{\mathbb{Q}}
\newcommand{\RR}{\mathbb{R}}
\newcommand{\Mb}{\bar{M}}
\newcommand{\Sb}{\bar{S}}
\newcommand{\fl}{f_{\ast}}
\newcommand{\fb}{\bar{f}}
\newcommand{\fbl}{\fb_{\ast}}
\newcommand{\shH}{\mathcal{H}}
\newcommand{\shHZ}{\shH_{\ZZ}}
\newcommand{\shHQ}{\shH_{\QQ}}
\newcommand{\shHR}{\shH_{\RR}}
\newcommand{\shHC}{\shH_{\CC}}
\newcommand{\into}{\hookrightarrow}
\newcommand{\jl}{j_{\ast}}
\newcommand{\tensor}{\otimes}
\DeclareMathOperator{\Hom}{Hom}
\newcommand{\BZ}{B_{\ZZ}}
\newcommand{\Bd}{B^{\ast}}
\newcommand{\shO}{\mathscr{O}}
\newcommand{\OS}{\shO_S}
\newcommand{\varphisig}{\varphi_{\sigma}}
\newcommand{\SLZ}{SL(2, \ZZ)}
\newcommand{\eps}{\varepsilon}
\newcommand{\epsb}{\bar{\eps}}
\title{The Cremmer-Scherk Mechanism\\
in F-theory Compactifications on K3 Manifolds}
\author[a]{Michael R. Douglas,}
\author[a]{Daniel S. Park}
\author[b,c]{and Christian Schnell}
\affiliation[a]{Simons Center for Geometry and Physics\\
Stony Brook University\\
Stony Brook, NY 11794-3636, USA}
\affiliation[b]{Department of Mathematics\\
Stony Brook University\\
Stony Brook, NY 11794-3651, USA}
\affiliation[c]{Mathematisches Institut\\
Universit\"at Bonn\\
Endenicher Allee 60, 53115 Bonn, Germany}
\emailAdd{mdouglas {\rm at} scgp.stonybrook.edu}
\emailAdd{dpark {\rm at} scgp.stonybrook.edu}
\emailAdd{cschnell {\rm at} math.sunysb.edu}
\abstract{It is well understood --- through string dualities ---
that there are 20 massless vector fields in the spectrum of
eight-dimensional F-theory compactifications on smooth
elliptically fibered K3 surfaces at a generic point in the K3
moduli space. Such F-theory vacua,
which do not have any enhanced gauge
symmetries, can be thought of as supersymmetric
type IIB compactifications on $\field{P}^1$
with 24 $(p,q)$ seven-branes. 
Naively, one might expect there to be 24 massless vector
fields in the eight-dimensional effective theory
coming from world-volume gauge fields of the 24
branes.
In this paper, we show how the vector field
spectrum of the eight-dimensional effective theory 
can be obtained from the point of view of type IIB
supergravity coupled to the world-volume theory
of the seven-branes.
In particular, we first show that
the two-forms of the type IIB theory
absorb the seven-brane world-volume
gauge fields via the
Cremmer-Scherk mechanism.
We then proceed to show that the massless
vector fields of the eight-dimensional theory
come from KK-reducing the $SL(2,\field{Z})$
doublet two-forms of type IIB theory
along $SL(2,\field{Z})$ doublet one-forms
on the $\field{P}^1$. We also discuss the relation
between these vector fields and the ``eaten"
world-volume vector fields of the seven-branes.}
\begin{document}
\maketitle
\flushbottom

\section{Introduction and Summary}

Ever since its discovery, F-theory \cite{Vafa:1996xn,
Morrison:1996na,Morrison:1996pp}
has played a prominent role in understanding the
landscape of string vacua. F-theory provides
a very rich, if not the richest, range of string vacua
in various dimensions.\footnote{See, for example,
\cite{Taylor:2011wt}.} This ``versatility" comes from
the fact that F-theory provides a framework to work
with strongly coupled string --- or to be exact,
type IIB --- backgrounds.
An F-theory vacuum can be thought of as
a compactification of a twelve-dimensional theory
on an elliptically fibered manifold $\bar M$
over some base $\bar S$.
What this background is actually
describing is a type IIB compactification on the manifold
$\bar S$ with a varying axio-dilaton profile
--- the value of the
axio-dilaton is encoded in the complex structure
of the elliptic fiber.

The strongly coupled nature of F-theory, however,
makes it difficult to study global F-theory backgrounds 
directly from the point of view of type IIB string theory.
There are many different approaches to understand these
vacua. One approach is to use string dualities with
M-theory or heterotic string theory \cite{Vafa:1996xn,
Morrison:1996na,Morrison:1996pp}.
Another is to study weakly coupled ``orientifold"
limits \cite{Sen:1996vd,Aluffi:2009tm,Clingher:2012rg}
of F-theory vacua.
Yet another is to study local backgrounds to gain insight
into global backgrounds
\cite{Donagi:2008ca,Beasley:2008dc,Beasley:2008kw,
Donagi:2008kj}.
By now there are many aspects of F-theory that are
well understood based on these approaches.
Some features, however, remain unclarified from the
point of view of type IIB string theory.

A subject that begs for better understanding is abelian gauge
symmetry. For example, the abelian gauge symmetry of
eight-dimensional or six-dimensional F-theory backgrounds
can be deduced using F-theory/M-theory duality. Its interpretation
in the original type IIB framework, however, has not been
explored extensively. Let us elaborate the issue with K3
compactifications of F-theory, which is the subject of this paper.

The simplest F-theory backgrounds are eight-dimensional
--- they come from compactifying the theory on
an elliptically fibered K3 manifold with a section.
When the K3 manifold only has $I_1$
singularities, these backgrounds describe type IIB
compactifications on $\field{P}^1$ with 24
$(p,q)$ seven-branes.
K3 compactifications of F-theory were thoroughly investigated
from --- and arguably even before \cite{Greene:1989ya}
--- the birth of F-theory \cite{Vafa:1996xn,Sen:1996vd,
Douglas:1996du,Banks:1996nj,LopesCardoso:1996hq,
Lerche:1998nx,DeWolfe:1998pr,Lerche:1999de}
and are very well understood
based on the aforementioned methods. In particular,
these compactifications are dual to $T^2$
compactifications of heterotic string theory, which are
perturbative string vacua.
The eight-dimensional F-theory compactification has
20 vector fields in its massless spectrum
at a generic point in the F-theory moduli space,
18 of which belong to the vector multiplets.
It is understood that these vector fields are
related to the world-volume vector fields of the
24 seven-branes present in the background.
The precise relation between the massless
vector fields of the 8D theory and the
vector fields living on the world-volume of
the seven-branes,
however, has not been explored further.
For example, while qualitative explanations
on the discrepancy between the number of
the branes and the number of vector fields in the
eight-dimensional theory have been given
\cite{Vafa:1996xn,Douglas:1996du},
these arguments have not been made very sharp.
In this paper, we expand on the idea of \cite{Douglas:1996du}
on how the vector field spectrum of F-theory compactified on
K3 can be obtained. In particular, we take the point of view that
these backgrounds are type IIB supergravity compactifications
on a $\field{P}^1$ with seven-branes in it.\footnote{This approach
to F-theory backgrounds has been utilized for different
purposes before, for example, in the works
\cite{Bergshoeff:2002mb,Bergshoeff:2006jj}.}\footnote{A similar
approach to obtaining matter spectra of
F-theory compactifications on Calabi-Yau
manifolds has been taken in
\cite{Grassi:2013kha,Grassi:2014sda}.}
We focus on the interaction of the
bulk two-form fields of the type IIB theory and the world-volume
gauge fields, ultimately showing that the gauge degrees of
freedom are eaten by the tensor fields
through the Cremmer-Scherk (CS)
mechanism\footnote{Incidentally, the initials of Cremmer-Scherk
coincide with those of Chern-Simons. Throughout this paper,
the acronym CS is exclusively used to refer to the former
combination.} \cite{Cremmer:1973mg}.

The Cremmer-Scherk mechanism is a generalized
version of the St\"uckelberg mechanism to the tensor/vector field
pair. Let us first review the St\"uckelberg mechanism before we
describe its generalized version.
An abelian vector field $A_\mu$ can become massive by
coupling to a scalar St\"uckelberg field $\phi$ by
\be
{1 \ov 2} (\p_\mu \phi - A_\mu)^2 \,.
\ee 
The gauge symmetry of the theory is given by
\be
A_\mu \ra A_\mu + \p_\mu \Lambda,\quad
\phi \ra \phi + \Lambda \,,
\ee
and hence the St\"uckelberg field can be gauged away.
In the end, the degrees of freedom of the scalar field
are eaten by the gauge field --- one is left with one massive
vector field in the theory.

One can readily generalize this mechanism for tensor-vector
interactions. That is, given a two-form field $B_{\mu \nu}$ and
vector field $A_\mu$, the two-form field becomes massive by
the covariant coupling
\be
{1 \ov 2} (\p_\mu A_\nu -\p_\nu A_\mu - B_{\mu \nu})^2 \,,
\label{t-vint}
\ee
given the gauge symmetry
\be
B_{\mu \nu} \ra B_{\mu \nu} + \p_\mu V_\nu -\p_\nu V_\mu ,\quad
A_\mu \ra A_\mu +V_\mu \,.
\label{t-vgauge}
\ee
Now the vector is ``eaten" by the tensor field ---
this is the Cremmer-Scherk mechanism.
The tensor-vector interaction \eq{t-vint} and the gauge symmetry
\eq{t-vgauge} is a familiar one --- it is precisely the way gauge fields
living on branes interact with bulk tensor fields.
The two-forms of the type IIB theory,
which form a doublet under the global
$SL(2,\field{Z})$ action of the theory,
couple to world-volume gauge fields in this way.
In this paper, we show that for
F-theory K3 compactifications ---
type IIB compactifications
on $\field{P}^1$ with 24 seven-branes
--- all the 24 world-volume gauge fields are ``eaten"
by the two-form fields.
More precisely, we find 24 linearly independent
$SL(2,\field{Z})$ doublet gauge transformations
\be
B^I_{MN} \ra B^I_{MN} +
d(\phi^I_a (z,\bar{z})  \Lambda^a_{\nu}), \quad
A^i_\mu \ra A^i_{\mu} -
(p_i \phi^1_a (z_i,\bar{z_i}) 
+q_i \phi^2_a (z_i,\bar{z_i}) )  \Lambda^a_{\mu}
\ee
where $a=1,\cdots,24$.
Here, $M, N$/$\mu, \nu$ are ten/eight-dimensional indices,
respectively, while $z$ and $\bar{z}$ denote the coordinates
on the internal $\field{P}^1$ manifold. The $I$ is an
$SL(2,\field{Z})$ index.
We have used $i=1, \cdots,24$ to index the branes,
while using $(z_i,\bar{z}_i)$ and $(p_i,q_i)$ to
denote their positions and brane
charges, respectively. $A_i$ is the gauge field living on
the $i$-th brane.
Hence due to the CS gauge symmetry of the system,
we can work in a ``unitary gauge" where the vector degrees
of freedom are pulled from the branes into the ``bulk."

Although the world-volume gauge fields are
eaten by the tensor fields through the CS mechanism,
it turns out that there are still massless vector fields ---
in fact, 20 of them ---
in the eight-dimensional effective theory.
These vector fields come from KK-reducing the
$SL(2,\field{Z})$ doublet two-form along
$SL(2,\field{Z})$ doublet one-form zero
modes on the compact $\field{P}^1$:
\be
B^I_{m \mu} =\sum_{k=1}^{20} \xi^{k,I}_{m} a^{k}_\mu\,.
\ee
Here, $m$ is the two-dimensional index along
the compact $\field{P}^1$ direction, while we have
used $k$ to enumerate the zero modes.
These zero modes are harmonic along the compact
directions, i.e.,
\be
d \xi^{k,I} =0 \,, \quad
d * \MM_{IJ }\xi^{k,J} =0 \,,
\label{harmonic}
\ee
while they must exhibit certain monodromies
around seven-brane loci. Here, $*$ denotes the
Hodge dual with respect to the metric of the base
manifold, while $\MM_{IJ}$
is a $SL(2,\field{Z})$ covariant Hermitian metric
which depends on the axio-dilaton.
We count the number of these zero-modes
by relating them to elements of the cohomology group
of a certain sheaf living on the base $\bar{S}$ of the
elliptic fibration.

To introduce this sheaf,
let us review the F-theory backgrounds
at hand in more detail.
As before, let us denote the 24 seven-branes as $B_i$ with
$i=1, \cdots, 24$. From the point of view of the K3 geometry,
these branes sit at the loci of the base $\field{P}^1$ where
the elliptic fiber degenerates.
Picking an ``$A$-cycle" $\alpha$ and a ``$B$-cycle"
$\beta$ along the fiber, we can determine the type of
brane sitting at $B_i$.\footnote{Such a choice can always
be made in a dense open patch of the base of the fibration.}
$B_i$ is a $(p_i,q_i)$ seven-brane when the
cycle  $p_i \a+ q_i \b$ degenerates at the brane locus.
Now the $A$ and $B$-cycle exhibit monodromies
around the brane locus --- these are precisely
the monodromies that $SL(2,\field{Z})$
covariant fields must exhibit around the branes
in order for the field values to be well-defined.

We see that one way to view the K3 manifold is
to see it as a family of elliptic curves parametrized by the
base manifold $\bar S$. From this point of view, the
harmonic one-forms \eq{harmonic} represent
elements of the first cohomology group of ``the sheaf of
local invariant one-cycles"\footnote{Given the elliptic
fibration $\bar f: \bar M \ra \bar S$, this sheaf is
obtained by pushing forward a certain sheaf living
in a dense open subset $S$ of $\bar S$ with respect
to the inclusion map $j: S \hookrightarrow \bar S$.
$S$ is obtained by excising the points on $\bar S$
where the fiber degenerates. Then one can consider
the elliptic fibration $f: M \ra S$ over $S$.
The sheaf living on $S$ is denoted by
$\shHQ=R^1 f_* \QQ$ for the locally
constant sheaf $\QQ$ on $M$. The cohomology group
of interest is $H^1 (\bar S, j_* \shHQ)$ \cite{Zucker}.
Explanation of the notation we use
can be found in standard texts on Hodge theory
such as \cite{Voisin}.}
living on $S$.
Hence, the dimension of this cohomology group,
$h^1 (\bar S, j_* \shHQ)$,
can be identified with the number of linearly
independent doublet harmonic one-forms.
Cohomology groups of such sheaves have
been examined systematically
in the mathematics literature \cite{Zucker},
and have been shown to have Hodge structures
compatible with that of
the elliptically fibered manifold itself.
We use the results of \cite{Zucker}
to show that $h^1 (\bar S, j_* \shHQ) =20$
(proposition \ref{prop:20}).

In this paper, we further relate the $SL(2,\field{Z})$ doublet
harmonic one-forms with the cohomology of the K3 manifold
in the following way.
We show that the doublet one-forms can be constructed
by integrating certain closed
two-forms of the underlying K3 manifold
along the $A$ and $B$-cycles of the fiber.
Let us be more precise.
There exists a 20-dimensional subspace of the
second cohomology of the
elliptically fibered K3 manifold --- which we denote
$H^2(\bar{M})_\perp$ ---
that is transverse to the fiber and the base.
For each element of $H^2 (\bar{M})_\perp$,
we show that there exists a certain
two-form $\Xi^k$ in the class
whose projection to the zero section
and to every fiber vanishes, i.e.,
\be
\Xi^k |_\text{Fiber} = \Xi^k |_\text{Base} = 0 \,.
\ee
In fact, we can choose $\Xi^k$ to be harmonic with respect
to the ``semi-flat metric" \cite{GrossWilson}
of elliptically fibered K3 manifolds
constructed in \cite{Greene:1989ya}.
In this case, a doublet of one-forms on the base manifold
\be
\begin{pmatrix}
\xi^{k,1} \\ \xi^{k,2}
\end{pmatrix}
=
\begin{pmatrix}
\int_\alpha \Xi^k \\
\int_\beta \Xi^k
\end{pmatrix}
\ee
can be defined. Note that these doublets automatically
exhibit the required monodromies around each brane
locus due to the behavior of the cycles around these
points. Also, these one-forms can be shown to be
harmonic as defined in \eq{harmonic}.
A more mathematical formulation, as well as a
proof of these facts are presented
in appendix \ref{ap:cohomsf}.

The massless vector field excitations are
equivalent to a collective excitation of
seven-brane vector fields and bulk fields
by CS gauge transformations.
A particularly useful gauge is one in which the
tensor field components are turned on along
directions transverse to the compact space.
In such a gauge, the tensor field excitations
decouple from the string junctions
\cite{DeWolfe:1998pr,Gaberdiel:1997ud,
Gaberdiel:1998mv,DeWolfe:1998zf,DeWolfe:1998eu, 
Fukae:1999zs, Huang:2013yta}
--- which are webs of $(p,q)$ strings ending
on the various seven-branes --- stretching
between the seven-branes, as the junctions
lie along the compact $\field{P}^1$.
Therefore, in this gauge, one can
identify the linear combinations
of the seven-brane vector fields
that reproduce the charges of the string
junctions under a particular vector field $a^k$.

In this sense, there is a correspondence between
the massless vector fields constructed by KK-reduction
and the world-volume vector fields living on the seven-branes.
To be more precise, it can be shown that
turning on an eight-dimensional vector field $a_k$, i.e.,
turning on the ten-dimensional tensor field
\be
B^I = \xi^{k,I} \wedge a^{k}\,,
\label{actB}
\ee
is gauge equivalent to turning on some linear
combination of seven-brane vector fields
\be
A^i = \Phi^i_k a^k \equiv
(p_i \varphi^{k,1} (z_i,\bar{z_i}) 
+q_i \varphi^{k,2} (z_i,\bar{z_i}) )  a^k \,,
\label{ID}
\ee
along with a tensor field
transverse to the compact directions:
\be
B^I =
- \varphi^{k,I} d a^k \,.
\ee
Here, $\varphi^{k,I}$ is a doublet scalar
living on the $\field{P}^1$ that satisfies
\be
d \varphi^{k,I} = \xi^{k,l} \,.
\label{gt}
\ee
Hence the tensor field background \eq{actB}
is equivalent to turning on the background
gauge fields \eq{ID} from the point of view
of the string junctions.
With further ``CS gauge fixing," we can
in fact show that there is a invertible linear map
between the vector fields $a_k$ and a
moduli-independent 20-dimensional
linear subspace $L$ of the
seven-brane vector fields.

This paper is organized as follows. In section \ref{s:review},
we review basic facts about K3 compactifications of F-theory
and show how they can be described
from the type IIB point of view.
In section \ref{s:stuckelberg}, we show that all the
seven-brane world-volume vector fields
are eaten by the type IIB tensor fields through
the Cremmer-Scherk mechanism, and identify the
responsible gauge transformations.
In section \ref{s:doublet}, we find the 20 $SL(2,\field{Z})$
doublet harmonic one-forms of the type IIB geometry.
We relate these one-forms to the elements of
the cohomology group $H^1 (\bar S, j_* \shHQ)$,
as well as $H^2 (\bar M)$.
We show that the type IIB doublet two-forms can be
reduced along these one-forms to yield 20 massless
vector fields in the 8D effective theory.
We proceed to establish the aforementioned
correspondence between these
harmonic one-forms and world-volume vector fields.
Further discussions and future directions
are presented in section \ref{s:future}.
In particular, we discuss the possibility of developing our
approach further towards
understanding more general
F-theory backgrounds.
We elaborate on some technical details that
we have omitted in the main text in the appendix.

\section{Review of F-theory Compactifications on Smooth K3 Surfaces}
\label{s:review}

In this section, we review eight-dimensional
backgrounds coming from compactifying
F-theory on a smooth generic elliptically fibered
K3 manifold with a section.\footnote{By smooth and generic,
we mean that the elliptic fibration has 24 $I_1$
singularities. We note that smooth
K3 manifolds can have type $II$
singularities at special points in the moduli space.}
We we take the point of view that these backgrounds
are supersymmetric solutions of type IIB theory,
that is, as a type IIB compactification on
$\field{P}^1$ with 24 $(p,q)$ seven-branes.
We first review its massless matter content
using F-theory/heterotic duality, focusing on the gauge fields,
and proceed to describe the supergravity solution
in more detail.
The content of this section is a reorganization of
facts presented in \cite{Vafa:1996xn,Taylor:2011wt,
Greene:1989ya,LopesCardoso:1996hq,DeWolfe:1998pr,
Lerche:1999de,Gaberdiel:1997ud,Gaberdiel:1998mv,
DeWolfe:1998zf,DeWolfe:1998eu,Fukae:1999zs,
Huang:2013yta},
among other places.
A great review of K3 geometry in the context of string theory
is given in \cite{Aspinwall:1996mn}.

Eight-dimensional F-theory backgrounds with minimal
supersymmetry come from compactifying F-theory on an
elliptically fibered K3 manifold $\bar f : \bar{M} \ra \bar{S}$ with a section.
We denote the K3 manifold by $\bar{M}$ and the base manifold by $\bar{S}$
throughout this paper.
The base $\bar{S}$ of the fibration is a $\field{P}^1$, and 
the manifold is parametrized by the Weierstrass equation
\be
Y^2 = X^3 + F_{8} X W^4 + G_{12} W^6 \,.
\label{K3eq}
\ee
Here $F_8$ and $G_{12}$ are sections of $8H$ and $12H$,
where $H$ is the hyperplane line bundle of the base $\field{P}^1$
manifold.
In this paper, we assume that the complex structure of $\bar{M}$ is
at a generic point in the moduli space.
We therefore assume
generic values for the coefficients of $F_8$ and $G_{12}$.
When this is the case, the manifold is smooth
and the elliptic fibration has 24 $I_1$ singularities at
the loci
\be
\Delta = 4F_8^3+27G_{12}^2 =0 \,.
\ee
$\Delta$ is the discriminant of the elliptic curve; the locus
$\Delta=0$ is called the discriminant locus.
Throughout this paper, we often choose
work in a local patch of the ambient toric manifold, in which case
the equation \eq{K3eq} can be written as
\be
y^2 = x^3 + f_8 (z) x+ g_{12}(z) \,,
\label{K3loc}
\ee
where $z$ is the local coordinate on the base manifold.
$f_8$ and $g_{12}$ are polynomials in $z$ with
degree $\leq \! 8$ and $\leq \!12$, respectively.
There are thirty-seven moduli in the eight-dimensional theory ---
18 complex moduli parametrizing the complex structure 
of the elliptic fibration \eq{K3eq} and one real modulus that
parametrizes the size of the base.

These eight-dimensional theories are dual to $E_8 \times E_8$
heterotic string compactifications on a two-torus.
The complex structure moduli of the elliptically fibered
K3 manifold map to the complex and K\"ahler
moduli of the torus and the Wilson lines
along the two $T^2$ directions.
The modulus that parametrizes the size of the
base of the K3 manifold maps to the value of the
dilaton of the heterotic theory.
At a generic point in the complex structure moduli space,
the dual heterotic background has generic Wilson
lines turned on.
The massless spectrum of the heterotic background
can be easily obtained by standard methods.
In particular, the theory at such a point has 20
gauge fields in the massless spectrum.
Sixteen of these gauge fields come
from the Cartan subgroup of the $E_8 \times E_8$ gauge
group, while four --- two of which are graviphotons ---
come from reducing the ten-dimensional
graviton and tensor along the two ``legs" of the torus.

The elliptic fibration \eq{K3eq} describes a
supersymmetric background of type IIB string theory.
Before we see how, let us first
describe the low-energy effective theory of
type IIB in more detail.
The massless bosonic degrees of freedom are given
by the graviton, a complex scalar, two two-forms,
and one self-dual four form.
Type IIB string theory is covariant under
a global $SL(2,\field{Z})$ group.
Following the conventions of \cite{Polchinski:1998rr},
the bosonic part of the type IIB action can be written in
an $SL(2,\field{Z})$ covariant way in
Einstein frame:
\be
S_{IIB} = {1 \ov 2 \kappa_{10}^2} \int d^{10} x
\sqrt{- g}(R - {\p_\mu \tau \p^\mu \bar{\tau} \ov 2 \tau_2^2}
-\M_{IJ} F_3^I \cdot F_3^J -{1 \ov 4} |\tilde {F}_5^2|)
-{\epsilon_{IJ} \ov 8 \kappa_{10}^2}  \int C_4 \wedge F_3^I \wedge F_3^J \,.
\label{IIB}
\ee
Here, $\tau = \tau_1 + i \tau_2$ is the axio-dilaton,
while $F_3^I$ is the two-from field strength doublet:
\be
\begin{pmatrix}
F_3^1 \\
F_3^2
\end{pmatrix}
=
\begin{pmatrix}
dB \\
dC
\end{pmatrix} \,.
\ee
The $SL(2,\field{Z})$ group acts on these fields as
\begin{align}
\tau &\ra {a \tau + b \ov c \tau +d} \\
F_3^I &\ra \Lambda^I_J F_3^J, \quad \Lambda^I_J =
\begin{pmatrix}
d&c \\ b&a
\end{pmatrix}
\label{sl2z}
\end{align}
where $a,b,c$ and $d$ are integers satisfying $ad-bc =1$.
We note that the dual six-form fields of $B$ and $C$
transform in the same way as the two-form fields under the
$SL(2,\field{Z})$ action.
The matrix $\M_{IJ}$ is given by
\be
\M_{IJ} = {1 \ov \tau_2}
\begin{pmatrix}
|\tau|^2&-\tau_1 \\
-\tau_1&1
\end{pmatrix} \,.
\label{mij}
\ee
$\tilde{F}_5$ is the $SL(2,\field{Z})$ neutral four-form field
strength.
Using the transformation rules, it can be checked
that the action \eq{IIB} is invariant under
$SL(2,\field{Z})$ transformations.

From the point of view of the type IIB theory,
the elliptic fibration \eq{K3eq} parametrizes a
supersymmetric solution to the equations of motion.
To be more precise, it describes a compactification of type IIB
theory on a $\field{P}^1$ with a varying axio-dilaton.
Taking the base of the elliptic fibration \eq{K3eq}
to be the compact $\field{P}^1$, the axio-dilaton
value at a point $z$ in the base is related to the complex
structure $\tau(z)$ of the fiber at the given point by
\cite{Greene:1989ya}
\be
j(\tau) = 1728 \times {4 F_8^3 \ov 4F_8^3 + 27 G_{12}^2}
\,.
\label{j}
\ee
$j$ is Klein's $j$-invariant.
The metric on the $\field{P}^1$ can also be computed
from the elliptic fibration \cite{Greene:1989ya}
\be
ds^2 =  {\tau_2 |\eta(\tau)|^4 \ov |\Delta|^{1/6}} dz d \bz \,,
\label{metric}
\ee
where $z$ is the complex coordinate along the
base $\field{P}^1$.\footnote{Throughout this paper,
we use $z$ and $\bar{z}$ to denote
the internal coordinates of the type IIB compactification.
The coordinates along the non-compact direction is denoted
by $x^\mu$.}
The 24 loci where the fiber degenerates
can be thought of as seven-brane loci.
Let us denote these branes as $B_1, \cdots B_{24}$.

Now the $j$-invariant \eq{j} is not
a one-to-one function from the upper-half complex plane
to the complex plane. In order to describe the
F-theory vacuum from the point of view of type IIB,
one must also choose two one-cycles ---
the $A$-cycle $\a$ and the $B$-cycle $\b$ ---
of the elliptic fiber that satisfy
\be
\a \cap \a = \b \cap \b=0, \quad \a \cap \b
=- \b \cap \a =1
\label{ab}
\ee
and to compute
\be
\tau = {\int_\b \lambda \ov \int_\a \lambda} \,,
\ee
where $\lambda$ is the unique holomorphic one-form
on the elliptic fiber.\footnote{An $A$-cycle and $B$-cycle
of an elliptic curve are, in fact,
defined to be a pair of one-cycles
that satisfy the very relations \eq{ab}.}
The choice of different pairs of cycles
that satisfy the conditions \eq{ab}
result in different type IIB backgrounds related by
$SL(2,\field{Z})$ transformations. The group of
global $SL(2,\field{Z})$ transformations is nothing but the
group of maps between different choices of cycles.

\begin{figure}[!t]
\centering\includegraphics[width=5cm]{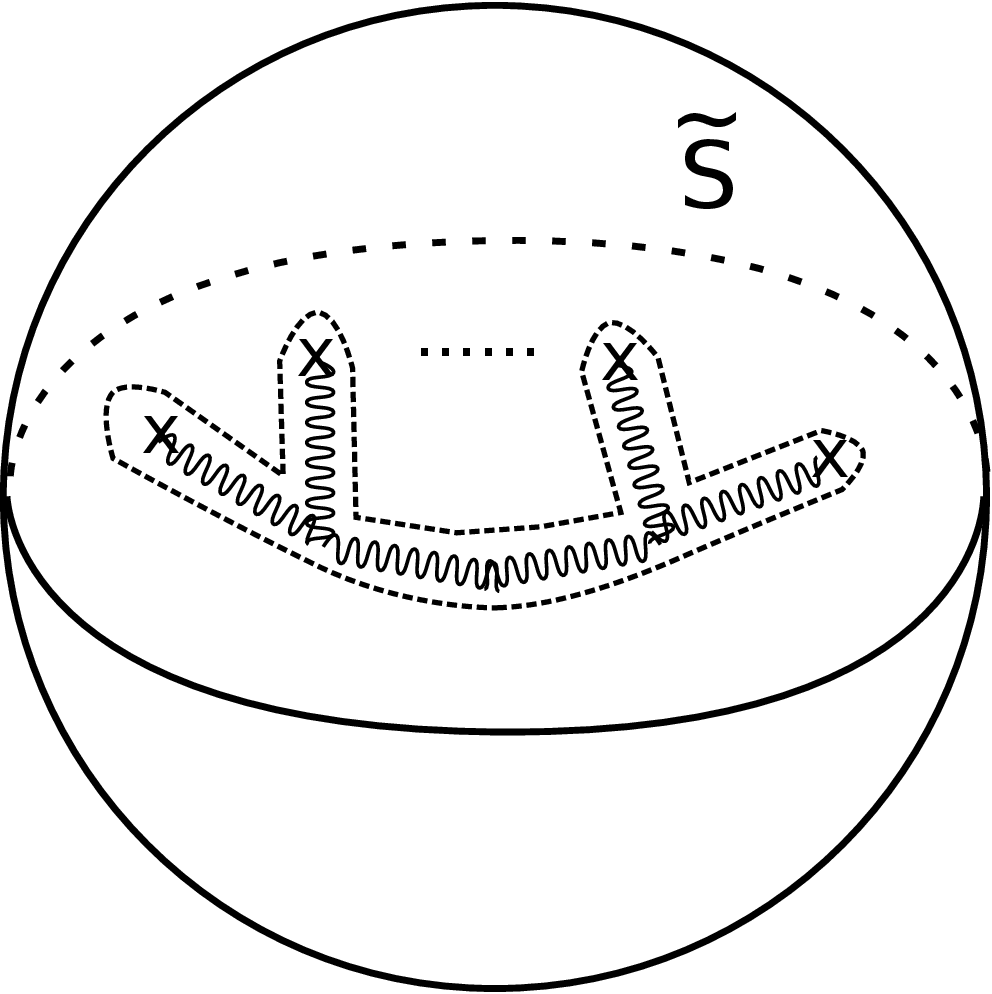}
\caption{\small A depiction of the $\field{P}^1$ base of an
elliptically fibered K3 manifold. The marked points denote
the seven-brane loci at which the fiber degenerates.
The cycles of the elliptic fiber undergo monodromies around
these points, and hence a global definition of an $A$-cycle
and a $B$-cycle of a fiber does not exist. We can, however,
define $A$ and $B$ cycles when we exclude branch cuts
--- depicted as wavy lines --- emanating from the seven-brane loci,
{\it i.e.,} when outside the region encircled by the dotted lines.
The $A$ and $B$ cycles are well defined in this dense open
subset, which we denote $\tilde S$.}
\label{f:P1}
\end{figure}

The cycles of the elliptic fiber undergo
monodromies as they go around the
seven-brane loci --- the value of the axio-dilaton
transforms under the corresponding monodromies accordingly.
Therefore the axio-dilaton profile of a non-trivial F-theory
background cannot be defined globally on the base
manifold --- in fact, there are branch cuts emanating from the
seven-brane loci.
In the case the elliptically fibered
manifold is a K3 manifold, the overall monodromy is
trivial.
Therefore we can ``join" the 24 branch cuts
emanating from each brane. We can then define the
$A$-cycle $\a$ and $B$-cycle $\b$
of the elliptic fibration unambiguously
in the dense open subset $\tilde{S}$
of the base $\field{P}^1$ manifold
obtained by excluding these branch cuts.
We note that the monodromy around
each brane --- and hence the type of each brane ---
depends on how one chooses these branch cuts
\cite{Gaberdiel:1998mv}.\footnote{In fact,
one can only make sense of the monodromies
as being an element of $SL(2,\field{Z})$ when $A$ and
$B$-cycles can be defined. Therefore a set of branes
can have many different representations as $(p,q)$-branes
depending on how one decides to ``join" the cuts emanating
from them. It is useful to note that two different $(p,q)$ brane
configurations obtained by choosing different ways of joining cuts
are not in general related to each other by a global $SL(2,\field{Z})$
transformation.
Such equivalences between different $(p,q)$-brane
configurations have been extensively studied from the point
of view of string junctions.}
Unless an F-theory background has an orientifold limit,
we must always pick such a patch
to describe the backgrounds in the type IIB framework.
In this sense, a useful way to view these F-theory backgrounds
is to interpret them as type IIB compactifications on a dense open
subset of $\field{P}^1$ rather than the full $\field{P}^1$.
We have depicted the situation in figure \ref{f:P1}.

We note that regardless of the way one chooses the cuts,
the physics of the eight-dimensional effective theory
stays the same. Now the type IIB description of a given
compactification can alter under drastic shifting of
these cuts.
For example, when one moves a cut through a
brane locus so that their relative positions change,
the $(p,q)$ charge of the brane typically jumps.
Under local variations of the cut, however,
where no such ``singular" shifts are made, the description
of the background in terms of type IIB theory should
remain invariant. This point turns out to be important
in determining the monodromies of various fields
of the type IIB theory.

It is worth noting that a choice of cuts
defines an $SL(2,\field{Z})$ bundle on
another dense open set $S$ of the base, where
\be
S=\field{P}^1\setminus \{B_1, \cdots,B_{24}\} \,.
\ee
The way to construct the bundle is the following.
Let us choose to join branch cuts so that the tree
of branch cuts only has trivalent vertices.
Each edge of the tree has an assigned element
of $SL(2,\field{Z})$ that corresponds to the
``monodromy" that would occur from crossing that cut
in a designated direction.
For every vertex of the tree of branch cuts,
the clockwise ``monodromies" $m_i$ $(i=1,2,3)$
of the three cuts joining at the vertex must
satisfy the condition
\be
m_1m_2m_3 = \rm{id} \,.
\ee
This data corresponds to the transition
functions of a principal $SL(2,\field{Z})$ bundle
of the 24-punctured sphere $S$. We study
the sheaf associated to this bundle in detail later on.

\begin{figure}[!t]
\centering\includegraphics[width=9cm]{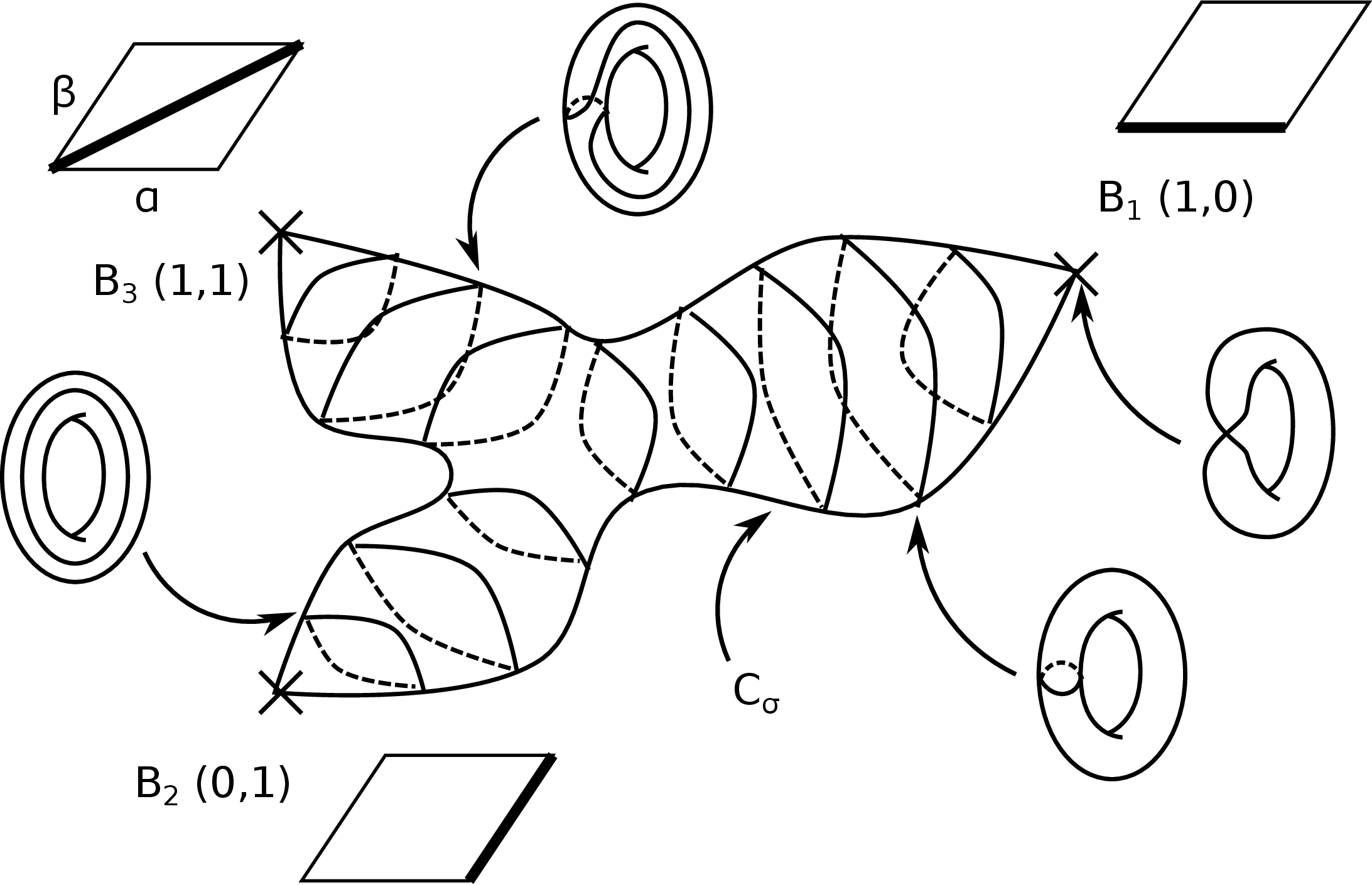}
\caption{\small An example of a two-cycle in an elliptically
fibered manifold ending at seven-brane loci, for a system
of three seven-branes $B_1$, $B_2$ and $B_3$.
The three branes are are of type
$(1,0)$, $(0,1)$ and $(1,1)$ respectively. At these branes,
the cycles $\a$, $\b$ and $\a+\b$ of the elliptic fiber
degenerate. As $(1,0)+(0,1)-(1,1)=(0,0)$,
there is a two-cycle $C_{\sigma}$
ending at the three branes with $\sigma=(1,1,-1)$.
The cycle starts off at $B_3$ where the degenerate cycle $(\alpha+\beta)$
shrinks to a point. As we trace the cycle $(\alpha+\beta)$ through the manifold
starting from $B_3$, the cycle eventually splits into cycles
$\alpha$ and $\beta$, which each shrink to a point where the branes
$B_1$ and $B_2$ are located --- the two-dimensional surface traced
out in the process defines a closed two-cycle in the elliptically
fibered manifold. The contours near each brane depict the cycles
that shrink at the corresponding brane.}
\label{f:pqdeg}
\end{figure}

As noted before,
choosing the cycles on the open set $\tilde{S}$ corresponds to
fixing a type IIB frame. Once the frame is fixed, the
types of the seven-branes at each degeneration point
can be determined. The seven-brane sitting at the point
where an irreducible cycle $p \a + q \b$ is shrinking is defined
to be a $(p,q)$ brane --- $p$ and $q$ must be mutually prime.
The monodromy around a $(p,q)$ brane is given in
the following way. The cycle $x\a+y\b$ transforms as
\be
\begin{pmatrix}
x \\
y
\end{pmatrix}
\ra
\begin{pmatrix}
1-pq & p^2 \\
-q^2 & 1+pq
\end{pmatrix}
\begin{pmatrix}
x \\
y
\end{pmatrix} \,,
\label{pqbrane}
\ee
upon rotating the elliptic fiber a full cycle
in the counter-clockwise direction around the $(p,q)$ brane.
Note that the vector $(p,q)^t$
--- representing the shrinking cycle at the seven-brane locus ---
is left invariant by this monodromy.
The $(1,0)$ brane is a D7-brane where a fundamental
string can end at, while D1-strings can end at $(0,1)$ branes.
In fact, $(p,q)$ seven-branes are defined to be seven-branes
at which $(p,q)$ string can end. Let us denote the brane charge
of each seven-brane $B_i$ as $(p_i,q_i)$.

Let us denote a 24 dimensional vector
$\sig = (\sig_1,\cdots,\sig_{24}) \in \field{Z}^{24}$ with
\be
\sum_i \sigma_i (p_{i},q_{i}) = (0,0)
\ee
a ``charge vector."\footnote{To use string junction
terminology, the $\field{Z}^{24}$ lattice
is the ``junction lattice" while our charge vectors are
``charge vectors of localized junctions."}
We note that the vector space of charge vectors is a
22 dimensional subspace of $\field{Z}^{24}$ due to
the two constraints.
For any charge vector $\sigma$,
there is an oriented two-cycle that begins at the
branes with $\sigma_i <0$ and ends
at branes with $\sigma_i>0$.
The end points of the cycle can be identified to be the singular
point of the fiber at the seven-brane locus, where a cycle of the
fiber is shrunk to a point.
As we move along the open patch of the base manifold,
we can trace the trajectory of such a one-cycle.
As we do so, the cycles split and merge, thereby tracing the locus
of the corresponding two-cycle inside the
elliptically fibered manifold.
$\sig$ uniquely determines a homology class of a
two-cycle. Let us denote this cycle $C_{\sigma}$.
Note that $C_{\sigma}$ is defined such that $|\sig_i|$
points of the cycle either end at ($\sig_i >0$) or begin
at ($\sig_i < 0$) the degenerate point of the fiber above
brane $B_i$ when $\sig_i \neq 0$. Examples of $C_\sigma$
for two different $\sig$ are given in figures \ref{f:pqdeg}
and \ref{f:mult}.

\begin{figure}[!t]
\centering\includegraphics[width=9cm]{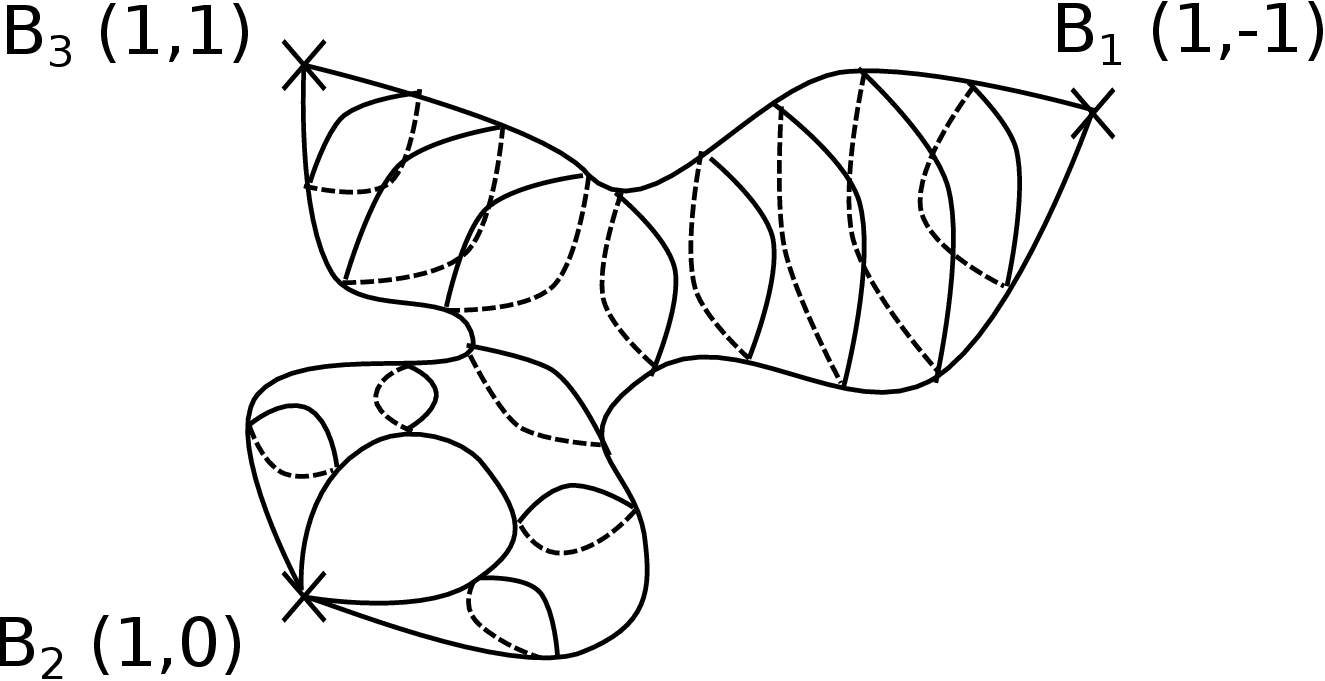}
\caption{\small When $|\sig_i|>1$,
multiple points of $C_\sig$ end or begin at
the brane locus $B_i$.
This image depicts $C_\sig$ for $\sig=(1,-2,1)$
for the system of branes $B_1$, $B_2$ and $B_3$
with branes charges $(1,-1)$, $(1,0)$
and $(1,1)$ respectively.}
\label{f:mult}
\end{figure}

Let us note that
\be
C_{\sig} + C_{\sig'} = C_{\sig''} \quad
{\rm when} \quad
\sig + \sig' = \sig''
\label{homeq}
\ee
where the former equality means that the cycles
are equal as homology classes.
This can be easily confirmed, as $C_{\sig} + C_{\sig'}$
can be smoothly deformed into $C_{\sig''}$ when
$\sig + \sig' = \sig''$. The inverse statement, however,
is not true. This is because two non-trivial charge
vectors correspond to trivial homology classes.\footnote{These
charge vectors are referred to as ``zero" or
``null vectors" in the string junction literature.}

The homology group of two-cycles $C_\sig$
--- {\it i.e.,} cycles
that interpolate between seven-brane loci ---
is in fact generated by 20 elements $C_1, \cdots, C_{20}$.
This is because the second homology group of a K3 manifold,
when viewed as a vector space, is 22 dimensional
--- one of which corresponds to the
class of the fiber and one of which corresponds to the class
of the base\footnote{By ``class of the base" we are actually
referring to the ``class of the section." Since we always assume
the existence of a section, we do not make the effort of
distinguishing the terminology.}.
The two-cycles we are interested in are generated
by the elements that are orthogonal --- with respect to the
intersection product --- to the base and fiber classes.
Let us denote this 20-dimensional space as $H_2 (\bar M)_\perp$.
The complex structure of an elliptically fibered K3 manifold
is determined by the ray of the complex vector
\be
(\int_{C_1} \Om, \cdots, \int_{C_{20}} \Om )
\label{vect}
\ee
where $\Om$ is the holomorphic two-form of the K3 manifold
that is unique (up to a factor).
Using the local coordinates \eq{K3loc}, $\Om$
can be explicitly written as
\be
\Om = {dx dz \ov y} \,.
\ee
The complex structure of a generic K3 manifold ---
one without the restriction of being elliptically fibered or
having a section --- is parametrized by a 22 dimensional
projective vector, obtained by integrating the complex two-form
over all generators of the homology group.
The K3 manifolds used for F-theory compactifications,
however, can be parametrized by the 20 dimensional
projective vector \eq{vect} due to the fact that
$\Om$ is orthogonal to the fiber and base directions, {\it i.e.,}
\be
\int_\text{Base} \Om = \int_\text{Fiber} \Om = 0 \,.
\ee
Despite that we have parameterized the complex structure of
an elliptically fibered K3 manifold by 20 projective coordinates,
its moduli space is 18-dimensional.
This follows from the fact that the vector \eq{vect} also
satisfies the constraint \cite{Aspinwall:1996mn}
\be
\int_{K3} \Om \wedge \Om =0 \,.
\ee

\begin{figure}[!t]
\centering\includegraphics[width=12cm]{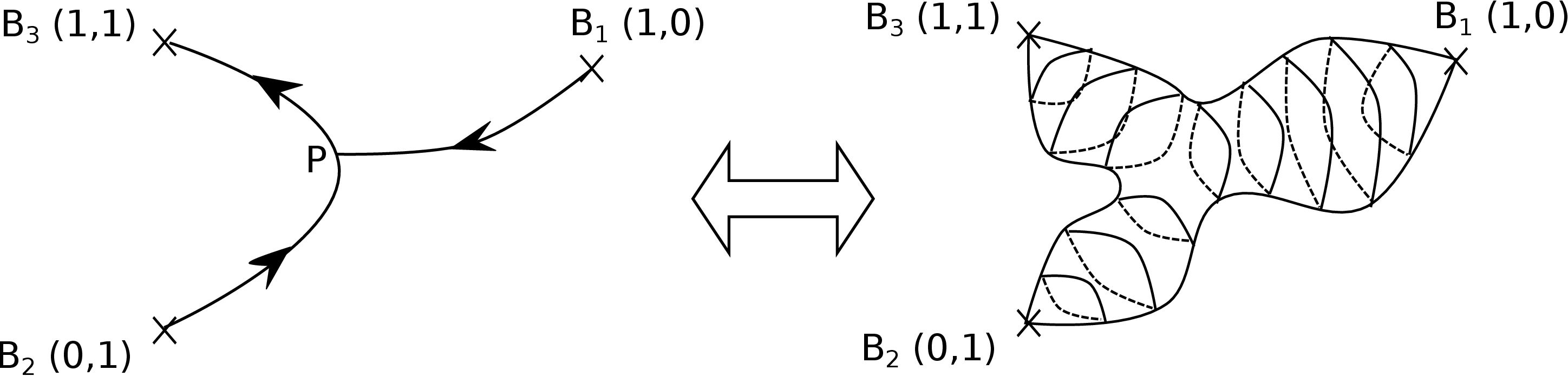}
\caption{\small Correspondence between string junctions
and two-cycles in the elliptically fibered manifold.
On the left side, we have
depicted a string junction meeting at junction point $P$ and
ending at three seven-branes,
each of type $(1,0)$, $(0,1)$ and $(1,1)$. One can ``fatten"
this junction to obtain an oriented two-cycle inside the elliptically
fibered manifold in the F-theory picture.}
\label{f:junc}
\end{figure}

The cycles $C_\sig$, as in figure \ref{f:junc}, can be
``thinned down" to a linear combination of
string junctions
--- a collection of directed $(p,q)$ string
segments connected to each other at nodes ---
stretched between the seven-branes
with net charge vector $\sig$.
Massive states of the eight-dimensional theory
can be obtained by quantizing
modes of these string junctions
and linear combinations thereof.
The correspondence between homologically
non-trivial two-cycles of the K3 manifold and
string junctions therefore implies the
correspondence between the two-cycles and the
massive particles of the 8D effective theory.

Let us end this section by commenting on the world-volume
theory living on a seven-brane. A seven-brane contains a
dynamical gauge field coming from quantizing strings with
both ends ending on that brane. The $(p,q)$ seven-brane
action can be written by first writing the D7-brane action
in Einstein frame and performing an $SL(2,\field{Z})$ transformation.
The linear terms relevant to investigating lifting of gauge fields
come from the kinetic term.
Being careful with the dilaton coupling, one can show that
the gauge kinetic term for a $(p,q)$ brane is given by
\be
-{\mu_8 \ov 4} \int d^8 x \sqrt{-g} (F_{\mu \nu}+p B_{\mu \nu}+ q C_{\mu \nu})^2
\label{pqkin}
\ee
where we have set $2\pi \alpha' = 1$
\cite{Douglas:1996du,Polchinski:1998rr}.
$\mu_8$ does not depend on the $(p,q)$ charge or the dynamical
axio-dilaton in Einstein frame.
From this, we see that the gauge coupling of the seven-branes
are independent of $(p,q)$ charge in Einstein frame ---
one could have already expected this, as type IIB theory
in Einstein frame is $SL(2,\field{Z})$ covariant.
Therefore the ten-dimensional type IIB action
corresponding to the F-theory compactification on K3
has 24 gauge fields $A^i_\mu$ living on the world volume of
seven-branes $B_i$ with the kinetic term
\be
-{\mu_8 \ov 4} \sum_{i=1}^{24}
\int d^8 x \sqrt{-g} (F^i_{\mu \nu}+p_i B_{\mu \nu}+ q_i C_{\mu \nu})^2 \,.
\label{kin24}
\ee
In the next section, we proceed to show how
these vector fields are
absorbed by the tensor fields
by the Cremmer-Scherk mechanism.

\section{The Cremmer-Scherk Mechanism} \label{s:stuckelberg}

In this section, we show that the vector fields living
on the world-volume of the seven-branes of the F-theory
background are ``eaten" by the type IIB tensor fields
through the Cremmer-Scherk mechanism.
We first review the Cremmer-Scherk gauge transformations
of F-theory compactifications on K3. We proceed to identify the
24 gauge transformations responsible for absorbing the vector
fields into the tensor degrees of freedom.

As can be seen from the previous section, type IIB backgrounds
with seven-branes are invariant under the local symmetry
\be
B^I \ra B^I +
d \Gamma^I , \quad
A^i_\mu \ra A^i_{\mu} -
\pi^i_* (p_i  \Gamma^1 (z_i,\bar{z_i}) 
+q_i \Gamma^2 (z_i,\bar{z_i}) )
\label{CSK3}
\ee
where $\Gamma^I$ is a doublet of one-forms.
As before, $I$ is the $SL(2,\field{Z})$ index and
$i$ indexes the branes. Also, $\pi^i_*$ is the push-forward
of the projection map to the $i$'th brane.
In order for \eq{CSK3} to make sense,
a constraint on the doublet one-form $\Gamma^I$
must be imposed. The value
\be
\pi^i_* (p_i  \Gamma^1 (z_i,\bar{z_i}) 
+q_i \Gamma^2 (z_i,\bar{z_i}) )
\label{gtonbloc}
\ee
must be unambiguously defined, i.e.,
it must be monodromy invariant at
each brane locus.

We stress that while the two-form field strength
must exhibit certain monodromies,
$\Gamma^I$ need not show such behavior.
Upon investigation of the Lagrangian of the theory,
one can verify that the field strengths
of the two-form fields
$F^I = dB^I$ are required to undergo
monodromies
\be
\begin{pmatrix}
dB \\ dC
\end{pmatrix}
\ra \begin{pmatrix}
1-pq &-q^2 \\
p^2 & 1+pq
\end{pmatrix}
\begin{pmatrix}
dB \\ dC
\end{pmatrix}
\label{2fmon}
\ee
upon counter-clockwise rotation around
--- or, equivalently, crossing the branch cut
in clockwise direction --- a $(p,q)$ brane.
Note that this is the transpose of the monodromy
\eq{pqbrane}.
Such monodromies are
imposed since we do not want the action to vary
upon shifting the position of the branch cuts.
The only other requirement the
two-form field values themselves must
satisfy is that
\be
p_i B_{\mu \nu} + q_i C_{\mu \nu}
\ee
be well-defined at brane loci $B_i$.
Any gauge transformation with
well-defined values of \eq{gtonbloc} at $B_i$
preserve both requirements.

\begin{figure}[!t]
\centering\includegraphics[width=5cm]{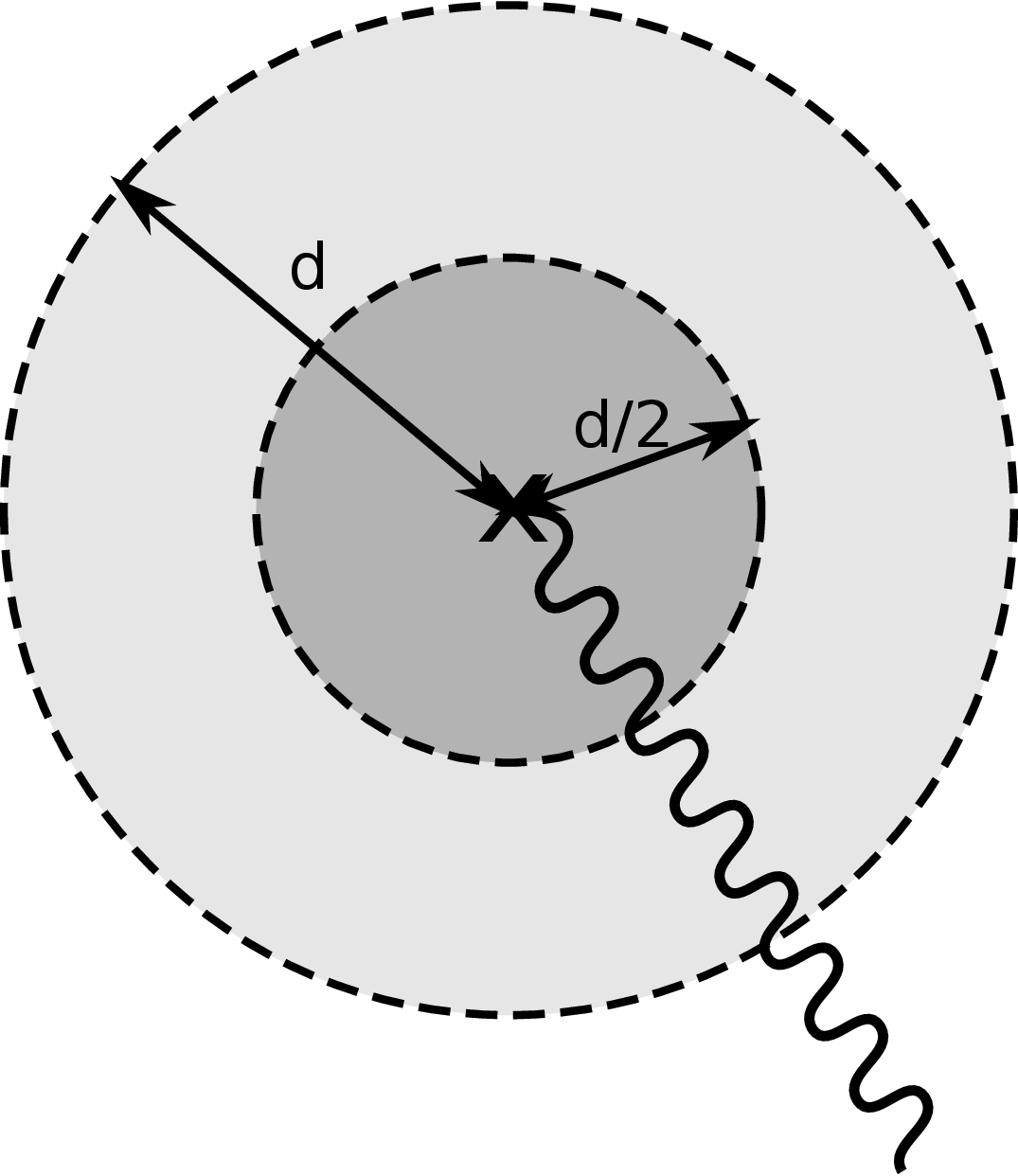}
\caption{\small A zoom-in on a small neighborhood
of a brane locus. The curvy line denotes the branch-cut
around the brane. The concentric circles of radius
$d/2$ and $d$ are chosen to be small enough so that they
do not intersect any other branch cuts.}
\label{f:contour}
\end{figure}

Now let us find a gauge transformation
that can be used to gauge away vector degrees of freedom
living on a particular brane $B_{i_0}$.
Let us assume a given seven-brane is a Dirichlet brane --- i.e.,
a $(1,0)$ brane --- and consider the gauge transformations of
the form
\be
\Gamma^I = \phi^I (z,\bar{z}) \Lambda (x^\mu)
\label{CSgt}
\ee
where $\phi^I$ is a doublet scalar dependent on the
internal coordinates while $\Lambda$ is an eight-dimensional
one-form. Taking the local holomorphic coordinate around
the brane locus $w=0$ to be $w$,
one can find some $d$ such that
the open neighborhood $|w|<d$ of the brane does not
include any other branch cut.
Then one can find a real ``bump function"
$f(w)$ that satisfies the following conditions:
\ben
\item $0 \leq f(w) \leq 1$ for all $w$, while $f(0) =1$.
\item $f(w)>0$ for $|w| < d/2$.
\item $f(w)=0$ for $|w|>3d/4$.
\item $f(w)$ is $C^\infty$ for $|w| <d$ and hence so on
the full manifold.
\een
Then, for
\be
\begin{pmatrix}
\Gamma^1 \\ \Gamma^2
\end{pmatrix}
=
\begin{pmatrix}
\phi^1 \\ \phi^2
\end{pmatrix}
\Lambda
=
\begin{pmatrix}
f(w) \\
0
\end{pmatrix} \Lambda \,,
\label{locgt}
\ee
the gauge transformation \eq{CSgt} is
well-defined on the full base manifold.
In particular, the value \eq{gtonbloc}
is well defined at each brane locus
unambiguously.
In fact, for each brane $B_i$,
\be
A_\mu^i \ra
\begin{cases}
A_\mu^i & \text{when } i\neq i_0\\
A_\mu^i-\Lambda_\mu & \text{when } i= i_0
\end{cases} \,.
\ee
The vector field $A^{i_0}$ is thereby eaten
by the two-form doublet by this CS gauge
transformation by setting $\Lambda_\mu =A^{i_0}_\mu$.
Likewise, for each brane $B_i$ of charge $(p_i,q_i)$,
we can construct a gauge transformation
\be
\begin{pmatrix}
\Gamma^1 \\ \Gamma^2
\end{pmatrix}
=
{f_i \ov p_i^2 + q_i^2}
\begin{pmatrix}
p_i \\
q_i
\end{pmatrix} \Lambda \,,
\ee
with an appropriate bump function $f_i$
to absorb $A^i$.

Hence, in ``unitary gauge" where all the gauge fields
living on the world-volume of the seven-branes are
eaten, the term \eq{kin24} of the Lagrangian becomes
\be
-{\mu_8 \ov 4} \sum_{i=1}^{24}
\int d^8 x \sqrt{-g} (p_i B_{\mu \nu}+ q_i C_{\mu \nu})^2 \,.
\label{8Dmass}
\ee
Regardless of this term, which looks like an eight-dimensional
mass term for the tensor fields, we still find modes of
the tensor fields responsible for massless degrees of freedom
of the 8D effective theory.
These come from modes with
components transverse to the seven-branes.
We proceed to examine these modes in the next section.

\section{$SL(2,\field{Z})$ Doublet Harmonic One-forms}
\label{s:doublet}

In this section, we show how 20 massless vector fields
arise in F-theory compactifications on K3 from the type IIB
perspective. In section \ref{ss:zmodes}, we
define what we mean by $SL(2,\field{Z})$ doublet
harmonic one-forms and show that the doublet
two-form of type IIB theory can be reduced along these
one-forms to yield massless vectors in the
eight-dimensional theory. In the following two sections,
we study these zero modes from two different points of view.
In section \ref{ss:h1}, we show that these harmonic forms
represent elements of the first cohomology group of a certain
sheaf living on the base manifold $\bar{S}$. We also derive
that the dimension of this cohomology group is 20.
In section \ref{ss:harm}, we establish the
correspondence between doublet harmonic one-forms
and certain closed two-forms --- namely, the ``semi-flat
harmonic two-forms" ---
living inside the underlying K3 manifold $\bar{M}$.
We end by showing how other particles
and fields of the 8D theory couple to the vectors
obtained by reducing along these one-forms
in section \ref{ss:KK}. The various couplings
are shown to encode the geometric data of $\bar{M}$.

\subsection{Definition} \label{ss:zmodes}

In this section, we show how massless vectors
arise in F-theory compactifications on K3.
These are obtained upon KK-reduction of the 
$SL(2,\field{Z})$ doublet two-forms of the type IIB theory
along doublet one-form zero modes.
In particular, we derive the ``zero mode condition" or the
``harmonicity condition" \eq{harmonic} a
$SL(2,\field{Z})$ doublet one-form $\xi^I$
living in the compactification manifold $S$
must satisfy.

Let us consider the KK-reduction of the doublet two-form fields
in the F-theory background along some doublet zero-mode.
The KK-reduction ansatz is given as
\be
B^I = \phi^I(z,\bz) \wedge b
\label{KKansatz}
\ee
where $\phi$ is some doublet $k$-form aligned along
the internal direction and $b$ is an eight-dimensional
$(2-k)$-form field.
In order for $b$ to be massless, $\phi$ must be closed:
\be
d \phi^I =0 \,.
\ee
At the same time, $\phi^I$ must respect the monodromies
defined by the background, i.e., it must exhibit the same
$SL(2,\field{Z})$ transformations that the field strengths
experience as they cross the branch cuts.
In particular, upon counter-clockwise rotation around
a brane locus $B_i$, $\phi^I$ must transform as
\be
\begin{pmatrix}
\phi^1 \\ \phi^2
\end{pmatrix}
\ra \begin{pmatrix}
1-p_i q_i &-q_i^2 \\
p_i^2 & 1+p_i q_i
\end{pmatrix}
\begin{pmatrix}
\phi^1 \\ \phi^2
\end{pmatrix}
\equiv  M_i \begin{pmatrix}
\phi^1 \\ \phi^2
\end{pmatrix}
\,.
\label{defmon}
\ee

There are two ways of seeing why such
behavior should be imposed on
$\phi^I$. The first way is by examining the
field strength of the doublet two-form
--- as explained in the previous section,
in order for the action of the
type IIB theory to be invariant under moving branch cuts,
the field strength-doublet $(dB,dC)$ must exhibit the
monodromies \eq{2fmon}. The constraint on $\phi^I$
follows by applying this condition to the ansatz
\eq{KKansatz}. Another way of seeing this constraint is
to consider the normalization of the mode $\phi^I$,
which is given by
\be
\int_{\tilde S} \MM_{IJ} \phi^I \wedge * \phi^J \,.
\label{norm}
\ee
Unless $\phi^I$ exhibits the correct monodromies,
the normalization \eq{norm} is not well defined --- in fact,
it would vary as one moves the branch cuts around.
On the other hand, when $\phi^I$ exhibit the desired
monodromies, we can define the norm
\be
\int_{\bar S}  \MM_{IJ} \phi^I \wedge * \phi^J
\equiv \int_{\tilde S}  \MM_{IJ} \phi^I \wedge * \phi^J \,.
\ee

It was pointed out in \cite{Vafa:1996xn} that there is no
way of turning on the two-form fields along the internal
directions and also satisfying the required monodromies.
Also, it is clear that there does not exist any non-zero
closed zero-forms --- i.e., constant scalars --- that exhibit
monodromic behavior. The interesting closed forms
that yield massless particles in eight-dimensions are
the one-forms, which we denote by
\be
\xi^k =
\begin{pmatrix}
\xi^{k,1} \\ \xi^{k,2}
\end{pmatrix} \,.
\ee
Hence we wish to find closed doublet one-forms
living in the base manifold $\bar{S}$.

These one-forms must be normalizable with
respect to the Hermitian inner-product
\be
\vev{\xi^k , \xi^l} \equiv
\int_{\tilde S}  \MM_{IJ} \xi^{k,I} \wedge * \xi^{l,J}
\label{hnorm}
\ee
defined by the Lagrangian, i.e.,
\be
\vev{\xi^k , \xi^k} < \infty \,.
\ee
Note that since $\xi^k$ have monodromies around the brane
loci, their components exhibit logarithmic behavior
at these points. Despite such singular behavior,
the modes can nevertheless be normalizable.
For example, when the singularities of $\xi^k$
are logarithmic, the integral near the brane loci
\be
\propto \int_0^\epsilon dr r (\ln r)^k
\ee
is convergent.

As with KK-reduction of any
$p$-form, the closed one-forms $\xi^k$ we reduce along
are defined up to an exact form $d \varphi$,
where $\varphi$ is a doublet of zero-forms.
This freedom comes as a remnant of the ``CS gauge
symmetry" of the type IIB background.
Such an ambiguity
is fixed by demanding that $\xi^k$ is harmonic
with respect to the Hermitian inner-product defined by
the Lagrangian, i.e.,
\be
d*\MM_{IJ} \xi^{k,J} =0 \,.
\label{dual}
\ee
A simple computation shows that imposing
this condition assures that $\xi^{k}$ has minimum
norm given the ``cohomology class" of $\xi^{k}$ is fixed.
In other words,
\be
\vev{\xi^k , \xi^k} < \vev{\xi^k, \xi^k} + \vev{d\varphi, d\varphi}
=\vev{\xi^k + d\varphi, \xi^k+d\varphi}
\label{harmineq}
\ee
for any non-trivial CS gauge transformation
$d\varphi \neq 0$ when
$\xi^k$ satisfies \eq{dual}.
Hence, the condition \eq{dual} singles out
elements of a certain cohomology class,
just as the usual harmonicity condition
singles out a harmonic form in a given
de Rham cohomology class.
We describe this cohomology in more
detail in the following section.

A source of worry for equation \eq{harmineq}
is that integration by parts has been used in obtaining
the equality --- for the most general CS gauge transformation
$\varphi$, the cross terms in the integral of interest may
have boundary terms.
Such boundary terms arise in the case that $\varphi$
exhibit monodromic behavior that is affine rather
than linear.
To be more precise,
let us first examine how $\varphi$ is allowed to
behave as we rotate around a brane locus.
In order for the $\xi^k+d\varphi$ to have a well defined
norm, $d \varphi$ must exhibit the monodromy
\be
\begin{pmatrix}
d\varphi^1 \\ d\varphi^2
\end{pmatrix}
\ra  \begin{pmatrix}
1-p_i q_i &-q_i^2 \\
p_i^2 & 1+p_i q_i
\end{pmatrix}
\begin{pmatrix}
d\varphi^1 \\ d\varphi^2
\end{pmatrix}
\ee
around brane $B_i$. $\varphi$ itself, however,
does not have to display this monodromy ---
in fact, it is allowed to shift:
\be
\begin{pmatrix}
\varphi^1 \\ \varphi^2
\end{pmatrix}
\ra \begin{pmatrix}
1-p_i q_i &-q_i^2 \\
p_i^2 & 1+p_i q_i
\end{pmatrix}
\begin{pmatrix}
\varphi^1 \\ \varphi^2
\end{pmatrix}
+ C_i \begin{pmatrix}
-q_i \\ p_i
\end{pmatrix} \,.
\ee
Note that the shift must be in the direction
$(-q_i,p_i)^t$, as the values
$p_i \varphi^1 + q_i \varphi^2$ must be
well-defined at the brane locus.
In the event that $C_i \neq 0$, the integration
by parts we have used in \eq{harmineq}
is no longer valid, as there would exist
boundary terms living on $\p \tilde S$
--- a contour encircling
the ``cuts" we have used to define
a type IIB frame --- that do not cancel out.

What makes the equality of \eq{harmineq}
work is that we only allow gauge transformations
such that
\be
p_i \varphi^1(z_i,\bar z_i) + q_i \varphi^2 (z_i,\bar z_i) = 0
\label{constgauge}
\ee
at seven-brane loci $(z_i,\bar z_i)$.
This is because we are computing the
eight-dimensional massless spectrum in
a ``unitary gauge" where all the seven-brane
world-volume vector fields are eaten by tensor
degrees of freedom. The condition
\eq{constgauge} ensures that the fluctuations
we consider still satisfy the unitary gauge condition.
In appendix \ref{ap:gauge}, we show that $C_i=0$
for gauge transformations $\varphi$
that do not excite seven-brane vector fields
and keep the mode $\xi^k+d\varphi$ normalizable.

Let us conclude this section by summarizing the
definition of a $SL(2,\field{Z})$ doublet harmonic
one-form $\xi$:
\[
\fbox{
\addtolength{\linewidth}{-2\fboxsep}%
\addtolength{\linewidth}{-2\fboxrule}%
\begin{minipage}{\linewidth}
\smallskip
\ben
\item $\xi$ is doublet of one-forms exhibiting
the monodromy \eq{defmon};
\be
\begin{pmatrix}
\xi^1 \\ \xi^2
\end{pmatrix}
\ra \begin{pmatrix}
1-p_i q_i &-q_i^2 \\
p_i^2 & 1+p_i q_i
\end{pmatrix}
\begin{pmatrix}
\xi^1 \\ \xi^2
\end{pmatrix}
\nonumber
\ee
around brane locus $B_i$.
\item It must satisfy the defining equations
\eq{harmonic}:
\be
d \xi^{I} =0 \,, \quad
d * \MM_{IJ }\xi^{J} =0 \,.
\nonumber
\ee
\item It must be normalizable with respect
to the inner-product \eq{hnorm}:
\be
\vev{\xi ,\eta} \equiv
\int_{\tilde S}  \MM_{IJ} \xi^{I} \wedge * \eta^{J} \,.
\nonumber
\ee
\een
\smallskip
\end{minipage}\nonumber
}
\]
In the subsequent sections, we go on to count and
construct such harmonic one-forms.

\subsection{Coordinate-free Description of Doublet
One-forms} \label{ss:h1}

In this section, we give a coordinate-free description of $\SLZ$ doublet one-forms on
$S$ in terms of certain sheaves. We also show that the dimension of the space of
doublet harmonic one-forms is equal to $20$, by applying results by Zucker \cite{Zucker}
about polarized variations of Hodge structure on curves.

Recall that $\fb \colon \Mb \to \Sb$ is an elliptically fibered K3-surface with a
section; here $\Sb$ is the complex projective line. We assume that there are exactly
$24$ singular fibers, each with a single ordinary double point --- for degree reasons,
the section cannot pass through any of the $24$ special points. If we denote by $S
\subseteq \Sb$ the complement of the $24$ singular values, and by $M = \fb^{-1}(S)$
the open surface obtained by removing the singular fibers from $\Mb$, then the
restriction $f \colon M \to S$ is a smooth family of elliptic curves. 

Sheaf theory allows us to make some of the constructions coordinate free. Instead of
choosing $A$-cycles and $B$-cycles over a dense open subset $\tilde S$ of $S$,
we can directly obtain the corresponding fiber bundle with fiber $\ZZ^2$ by defining
\[
	\shHZ = R^1 \fl \ZZ
\]
as the first higher direct image sheaf of the constant sheaf $\ZZ$ on $M$.%
\footnote{We use similar notation for other coefficient rings such as $\QQ$ or $\CC$.
Note that $\shHQ$, $\shHR$, and $\shHC$ are basically interchangeable, as they
all contain the same information.}
At each point $s \in S$, the stalk of the sheaf $\shHZ$ is equal to the first
cohomology group $H^1(E_s, \ZZ)$ of the corresponding elliptic curve $E_s =
f^{-1}(s)$. Making a choice of $A$-cycle and $B$-cycle over a simply-connected open
subset of $S$ is the same thing as choosing a local trivialization of the sheaf
$\shHZ$. 

Now the first cohomology group $H^1(E, \ZZ)$ carries a polarized Hodge
structure of weight $1$. The Hodge structure is given by the decomposition
\[
	H^1(E, \CC) \simeq H^1(E, \ZZ) \tensor_{\ZZ} \CC 
		= H^{1,0}(E) \oplus H^{0,1}(E)
\]
according to type; the polarization is given by the intersection form
\[
	Q(\alpha, \beta) = \int_E \alpha \wedge \beta.
\]
It is a polarization because of the Riemann bilinear relations: the Hermitian form
$\alpha \mapsto i^{p-q} Q(\alpha, \overline{\alpha})$ is positive-definite on the
subspace $H^{p,q}(E)$, and the above decomposition is orthogonal with respect to the
resulting Hermitian inner product on $H^1(E, \CC)$. 

Because the same is true at every point of $S$, the locally constant sheaf $\shHZ$ is
part of a polarized variation of Hodge structure $\shH$ on $S$. Let us briefly explain
what this means. The holomorphic vector bundle $\shH = \shHZ \tensor_{\ZZ} \OS$ has a
natural flat connection 
\[
	\nabla \colon \shH \to \Omega_S^1 \tensor_{\OS} \shH,
\]
with the property that the sheaf of flat sections is isomorphic to $\shHC$. Both
$\shH$ and $\nabla$ can also be constructed geometrically, and are known as a
Gauss-Manin system. The additional data coming from the polarized Hodge structures on
the cohomology of the fibers are a holomorphic subbundle $F^1 \shH$, corresponding to
the subspace $H^{1,0}(E_s)$ in the above decomposition, and a flat pairing $\shHZ
\tensor_{\ZZ} \shHZ \to \ZZ$, corresponding to the intersection pairing.

Because the locally constant sheaf $\shHZ$ contains the information about the
monodromy of $A$-cycles and $B$-cycles, and because $\shH = \shHZ \tensor_{\ZZ} \OS$,
an $\SLZ$ doublet $k$-form is easily seen to be the same thing as a smooth $k$-form
on $S$ with coefficients in the vector bundle $\shH$. We denote the space of all such
forms by the symbol
\[
	A^k(S, \shH).
\]
The connection $\nabla$ can be extended to an operator from doublet $k$-forms to
doublet $(k+1)$-forms, and the resulting complex
\[
\begin{tikzcd}[column sep=small]
0 \rar & A^0(S, \shH) \rar{\nabla} & A^1(S, \shH) \rar{\nabla} & A^2(S, \shH)
	\rar & 0
\end{tikzcd}
\]
computes the cohomology groups $H^k(S, \shHC)$ of the locally constant sheaf $\shHC$,
by a version of the Poincar\'e lemma.

One might expect naively that the number of zero-modes defined in the previous
section can be obtained by computing the dimension of the cohomology group $H^1(S,
\shHQ)$. This is not true, since elements of this cohomology group can
exhibit singular behavior near the discriminant locus, and may therefore not be
normalizable with respect to the inner-product defined in the previous section. In
fact, the dimension can be shown to be $20 + 24 = 44$.
Instead, the correct cohomology group to consider is
\[
	H^1 \bigl( \Sb, \jl \shHQ \bigr),
\]
where $j \colon S \into \Sb$ denotes the inclusion map of $S$ into $\Sb$. In
appendix \ref{ap:sheafcohom}, we use some theorems by Zucker \cite{Zucker}
to prove that $H^k \bigl( \Sb, \jl \shHQ \bigr) = 0$ for $k \neq 1$, and that
\[
	\dim H^1 \bigl( \Sb, \jl \shHQ \bigr) = 20,
\]
as expected. Moreover, it is shown in \cite{Zucker} that
the first cohomology group is isomorphic to the subspace of $A^1(S, \shH)$ consisting
of forms that are square-integrable and harmonic (with respect to the Hodge metric on
$\shH$ and the Poincar\'e metric on $S$). These two conditions are exactly the same
as in the previous section, and so we deduce that the space of $\SLZ$ doublet
harmonic one-forms is indeed $20$-dimensional.

The work of Zucker also endows $H^1 \bigl( \Sb, \jl \shHQ \bigr)$ with a polarized
Hodge structure of weight $2$. This Hodge structure is compatible with that on
$H^2(\Mb, \QQ)$; more precisely, there is a natural morphism
\[
	H^2(\Mb, \QQ)_{\perp} \to H^1 \bigl( \Sb, \jl \shHQ \bigr),
\]
and this morphism is an isomorphism of polarized Hodge structures. We will see below
how this statement about cohomology groups can be sharpened to a result about spaces
of harmonic forms.

\subsection{Construction from Harmonic Two-forms on ${M}$}
\label{ss:harm}

In this section, we explain how to construct
the doublet harmonic one-forms from the point of
view of the K3 manifold $\bar{M}$.
To be more precise, we show that
there is a correspondence between the doublet
harmonic one-forms and two-forms of
the K3 manifold that are harmonic with respect
to the semi-flat metric \cite{Greene:1989ya}.
We conjecture that these ``semi-flat harmonic
two-forms" can be obtained as a limit of
harmonic two-forms of the K3 manifold with
respect to the Calabi-Yau metric.

Let us begin by noting that a
natural way to obtain a doublet of closed
one-forms on the base manifold $S$ of the
elliptic fibration, that displays the monodromies
of the cycles of the fiber, is by using two-forms
of the underlying K3 geometry.
Consider a smooth
closed two-form $\Xi$ living inside the $M$
obtained by excising the 24 singular fibers of the
K3 manifold $\bar M$.
Now for each point $z$ on the base manifold, let us
define a doublet of forms
\be
\xi =
\begin{pmatrix}
\int_{\alpha} \Xi \\
\int_{\beta}  \Xi
\end{pmatrix}
\label{xi}
\ee
where the integration cycles are taken to lie within
the holomorphic fiber above the point $z$.
$\alpha$ and $\beta$ are the $A$ and $B$-cycle
of the fibration used to define the type IIB frame.
If $\xi$ is ``well defined,"
it is a doublet of one-forms that
exhibit the correct monodromies.
It is also closed due to the closedness of $\Xi$.

In order for $\xi$ to be well defined,
the projection of $\Xi$ to each fiber must vanish.
Meanwhile, components of $\Xi$ with both legs
parallel
to the base would not affect $\xi$ and should be
``gauged away" if one wishes to establish a
one-to-one correspondence between one-forms
on the base and two-forms in the full manifold.
Let us hence assume the components of $\Xi$
with both legs along either the fiber or the base
direction vanish.
A better presentation of this condition
is given shortly.

For $\xi$ to be harmonic
as defined in \eq{harmonic}, an additional condition
on $\Xi$ must be imposed. It is in fact that $\Xi$
should be harmonic with respect to the semi-flat
metric constructed in \cite{Greene:1989ya}.
More precisely, we consider the family of semi-flat
metrics whose K\"ahler form $J_t$ is given by
\be
J_t /i ={W(z,\bar z) \ov t} \th^z \wedge {\th^\bz} +
 {t \ov W(z, \bar z)} \th^w \wedge \th^{\bar w}
\label{semiflat}
\ee
for some function $W$ \cite{GrossWilson}.
The one forms $\th^z$ and $\th^\bz$ are
defined to as
\be
\th^z =dz, \quad \th^{\bar z} = d\bar z\,,
\ee
where $z$ is the holomorphic coordinate on the base.
$\th^w$ and $\th^{\bar w}$, which we define shortly after,
are one-forms aligned along the fiber direction.

The semi-flat metric is a Ricci-flat K\"ahler
metric on the K3 manifold, which is locally
defined by the hypersurface equation
\be
y^2 = x^3 + f_8 (z) x + g_{12} (z) \,.
\ee
$W$ is given by \eq{metric} --- we can in
fact use the Thomae formula to obtain the expression
\be
W(z,\bar z) = {4 \pi^2 \ov 2^{1/6}} {\tau_2 |\eta(\tau)|^4  \ov |\Delta|^{1/6}} 
= -i \int \lb \wedge \l \,.
\ee
$\l=dx/y$ here is the unique holomorphic one-form
of the fiber while the integral is taken to be along
the elliptic fiber at $z$.
This relation is derived in appendix \ref{ap:thomae}
--- the normalization constant is added for
aesthetic reasons.
The one-form
$\th^w$ can be expressed using the
canonical holomorphic coordinate
\be
\zeta (x)= \int^x \l
\ee
--- where the starting point of the integral is
the locus of the zero section on the fiber
--- by
\be
\th^w = d\zeta -{ (\p \l , \lb)\zeta
- (\p \l,  \l) \bar\zeta \ov (\l , \lb)} dz \,.
\ee
The one-form $\p\l$ living on the fiber can be
explicitly written as
\be
\p \l = -{f_8' (z) x + g_{12}'(z) \ov 2y^3} dx.
\ee
While $\p\l$ shows singular behavior at
certain points on the elliptic curve,
it is a ``differential of the third kind," i.e.,
its integral over closed cycles of the elliptic curve
are nevertheless well-defined.
Hence the values $(\p \l , \lb)$ and $(\p \l,  \l)$
are well-defined for the inner-product
\be
(\kappa,\l) \equiv
\int_\alpha \kappa \int_\beta \l
-\int_\beta \kappa \int_\alpha \l \,.
\ee

The one form $\th^w$ may at first sight
look rather peculiar --- it, however, can be
re-written in a simple way, using the coordinates
$x_1$ and $x_2$ which parametrize the flat elliptic fiber
such that
\be
x_1 \cong x_1 +1, \quad x_1+ix_2 \cong x_1+ix_2 + \tau(z) \,.
\ee
Upon acknowledging that
\be
(\int_\a \l) x_1 + (\int_\b \l) x_2 = \zeta \,,
\ee
it can be shown that
\be
{1 \ov W(z, \bar z)} \th^w \wedge \th^{\bar w} = {1 \ov 2\tau_2}(dx_1 +\tau dx_2)
\wedge (dx_1 +\bar \tau dx_2) \,.
\ee
We hence see that $J_t$ is aligned in the base
and fiber directions --- it is orthogonal to cycles of the manifold
that are orthogonal to the class of the fiber and the base.
Despite the nice properties of the semi-flat metric,
it fails to be a smooth Calabi-Yau metric, as it degenerates at
the discriminant locus of the elliptic fibration.
It has, however, been shown that
it is a good approximation to
the Calabi-Yau metric on the K3 manifold with
fiber size $t$ as $t$ approaches zero \cite{GrossWilson}.
It also is a smooth, non-degenerate
Calabi-Yau metric of the open manifold $M$.

In order for a doublet one-form $\xi$
constructed from a two-form $\Xi$ living inside this
manifold to be well defined,
$\Xi_{z \bar z } = \Xi _{w \bar w} =0$, as discussed
at the beginning of this section.
This condition can be expressed using the
following harmonic two-forms with respect to $J_t$:
\be
B \equiv {W(z,\bar z)} \th^z \wedge {\th^\bz},
\quad
F \equiv {1 \ov W(z, \bar z)} \th^w \wedge \th^{\bar w} \,,
\label{BandF}
\ee
as
\be
\Xi \wedge B = 
\Xi \wedge F = 0 \,.
\label{welldefined0}
\ee
Although $B$ and $F$ behave singularly
at discriminant loci, they nevertheless represent
cohomology classes of the manifold $\bar M$.
While de Rham cohomology is defined by using
smooth differential forms, a form $\Xi$ that is not
necessarily smooth still represents a cohomology
class as long as its integrals along
homology classes are well-defined.\footnote{A
representative example of
non-smooth forms with a well-defined cohomology
class is a differential form of the third kind on a
algebraic curve. Such one-forms are singular, but
have only higher order poles and no residues.
Hence the integrals of such a form along closed
cycles of an algebraic curve are well-defined.}
In this case, the cohomology class of $\Xi$ can
is given by the dual cohomology class of
\be
(\int_{C_i} \Xi) [C_i] \,,
\ee
with respect to the canonical pairing
\be
\vev{C, \om} = \int_C \om
\label{pairing}
\ee
between forms and cycles.\footnote{When we say
a cycle and a closed form are ``dual" in this paper,
we always mean that it is dual with respect to
the pairing \eq{pairing}.}
Here $C_i$ is the basis of the homology group
of the manifold.
The forms $B$ and $F$ are in fact
dual to the base and the fiber
class of $\bar M$. In particular, it can be
shown that
\be
\int_\text{Base} B =  \int_{\bar M} \Om \wedge \bar \Om,
\quad
\int_\text{Fiber} F = 1 \,,
\quad
\int_\text{Base} F=
\int_\text{Fiber}B=0 \,,
\ee
while the integrals of $B$ and $F$ over
cycles orthogonal to the base and fiber vanish.
It is worth noting that the forms $B$ and $F$
are normalizable with respect to the
semi-flat metric at finite $t$;
\be
{1 \ov t^2}\int_{\bar M} B \wedge *_{sf} B =
{t^2}\int_{\bar M} F \wedge *_{sf} F =
 \int_{\bar M} \Om \wedge \bar \Om \,.
\ee
Here, $*_{sf}$ denotes the Hodge dual
with respect to the semi-flat metric
while $\Om$ is the holomorphic two-form.

In appendix \ref{ap:dual},
we show that when $\Xi$ is harmonic
with respect to the semi-flat metric,
and its components $\Xi_{z \bar z}$ and
$\Xi_{w \bar w}$ vanish,
\be
*\MM_{IJ} \xi^J=
\begin{pmatrix}
\int_{\beta} *_{sf}\Xi \\
-\int_{\alpha}  *_{sf}\Xi
\end{pmatrix}
\label{xidual}
\ee
for $\xi$ constructed by \eq{xi}.
In this case, the condition
\be
d*\MM_{IJ} \xi^J =0
\ee
is satisfied since $*_{sf}\Xi$ is also closed.
Given this result, it is straightforward
to obtain the inner-product \eq{hnorm}
of harmonic one-form
doublets $\xi$ and $\eta$ constructed
from harmonic two-forms $\Xi$ and $H$
as an integral on $M$.
It is a simple exercise to show in fact
that
\begin{align}
\begin{split}
\vev{\xi , \eta} &\equiv
\int_{S}  \MM_{IJ} \xi^{I} \wedge * \eta^{J}
= \int_{ S} (\int_\alpha \Xi \intb *_{sf}H
-\inta *_{sf}H \intb \Xi) \\
&= \int_{M} \Xi \wedge *_{sf}H
\equiv \int_{\bar M} \Xi \wedge *_{sf}H \,,
\end{split}
\label{2formnorm}
\end{align}
using properties of harmonic forms
of the semi-flat metric derived in appendix
\ref{ap:dual}.
Hence in order for $\xi$ to be normalizable
as defined in section \ref{ss:harm}, $\Xi$
must also be normalizable with respect to
the canonical inner-product
defined by the semi-flat metric.

Before we carry further on, let us
sum up the properties of the two-forms
$\Xi$ that produce doublet harmonic
one-forms upon integrating along
cycles of the fiber:
\[
\fbox{
\addtolength{\linewidth}{-2\fboxsep}%
\addtolength{\linewidth}{-2\fboxrule}%
\begin{minipage}{\linewidth}
\smallskip
\ben
\item $\Xi$ is harmonic with respect
to the semi-flat metric, i.e., both
$\Xi$ and $*_{sf} \Xi$ are closed.
\item $\Xi$ satisfies
\be
\Xi \wedge B = 
\Xi \wedge F = 0 \,,
\label{welldefined}
\ee
for the two-forms $B$ and $F$
defined in equation \eq{BandF}.
\item $\Xi$ is normalizable with
respect to the inner-product
\be
\vev{\Xi, H}
\equiv
\int_{\bar M} \Xi \wedge *_{sf} H \,.
\label{2formnorm2}
\ee
\een
\smallskip
\end{minipage}\nonumber
}
\]
Let us denote such harmonic two-forms,
``semi-flat harmonic two-forms."
It is worth commenting that the definition
of semi-flat harmonic forms is independent
of the parameter $t$.
This is because the Hodge dual $*_{sf}$
acting on a two-form $\Xi$
is independent of $t$ when $\Xi$
satisfies \eq{welldefined}.
This, in particular, implies that these
harmonic forms can be defined at
the singular point $t =0$.

So far, we have shown that there is a
map from semi-flat harmonic two-forms of a
dense open subset $M$ of the
K3 manifold $\bar{M}$ to doublet harmonic
one-forms on the base $\bar{S}$.
We show that this map is actually
bijective in appendix \ref{ap:dual}.
This implies that the 20-dimensional
space of doublet one-forms
$H^1(\bar{S},j_* \HH_\RR)$
can be lifted to a 20-dimensional space of
semi-flat harmonic two-forms.
These two-forms are a priori defined only
on $M$.

A natural question to ask is whether these
20 semi-flat harmonic
two-forms are related to cohomology classes
of the K3 manifold $H^2 (\bar{M})$.
In appendix \ref{ap:cohomsf} we show that this
is in fact the case. To be more precise, let us first
describe the cohomology group of the open
manifold $M$. The dual homology group of $M$
is generated by 21 elements inherited from $\bar{M}$
--- the class of the section becomes trivial in $M$ ---
and 24 cycles attached to each degenerate fiber.
The 24 cycles are represented by tori $T_i$
constructed by rotating an invariant cycle
around a degeneration locus $B_i$.
In proposition \ref{prop:H2}, we prove that
the 20 semi-flat harmonic two-forms $\{ \Xi_k \}$
span the subgroup of $H^2(M)$ obtained from
pulling back $H^2(\bar{M})_\perp$ via the inclusion map
$M \into \bar{M}$.
A more practical way to say this is that
$\{ \Xi_k \}$ can be extended to the manifold $\bar{M}$,
and that it represents a basis of the cohomology group
$H^2(\bar{M})_\perp$.
Recall that $H^2(\bar{M})_\perp$
is defined to be the orthogonal space to the
cohomology classes $[B]$ and $[F]$, by which
we denote the duals of the homology classes
of the base and fiber, respectively.

Let us provide an intuitive sketch of why
the dual homology elements
of $\{ \Xi_k \}$ must lie
within the image of $H_2 (\bar{M})_\perp$
in $H_2 (M)$. This turns out to be a consequence
of imposing normalizability on $\Xi_k$.
We note that since the semi-flat harmonic two-forms
are orthogonal to the base and fiber directions,
it is enough to show that $\{ \Xi_k \}$ are orthogonal
to the cycles $T_i$ described in the preceding paragraph.
To show this, let us assume that
a semi-flat harmonic two-form $\Xi$ has a non-trivial
integral over some cycle $T_i$, i.e.,
\be
\int_{T_i}\Xi = C_i \,.
\ee
Recall that $T_i$ is constructed by rotating the
invariant cycle $(p_i \alpha+q_i \beta)$ around the
degeneration locus $B_i$.
Denoting the doublet harmonic one-form constructed
from $\Xi$ as $\xi$, this implies that
\be
\oint_c (p_i \xi^1 + q_i \xi^2) = C_i
\ee
for a contour $c$ surrounding $B_i$. Following the latter
part of appendix \ref{ap:gauge}, it can then be shown
that such $\xi$ cannot be normalizable with respect to the
norm defined for doublet one-forms, due to the divergent behavior
of $\xi$ near $B_i$. It follows that
$\Xi$ is not normalizable with respect to the
semi-flat metric, hence concluding the proof.

The fact that the semi-flat harmonic two-forms
form a basis for $H^2(\bar{M})_\perp$
suggests that they can be related to harmonic forms of
the K3 manifold with respect to a class of
smooth Calabi-Yau metrics in the following way.
Let us consider a class of smooth, non-degenerate
Ricci flat metrics with K\"ahler form $K_t$
that satisfies the following conditions:
\ben
\item The dual homology class of $K_t$
is aligned in the direction of the class of the
base and the fiber.
\item $K_t \wedge K_t = 2 \Om \wedge \bar \Om$
for the holomorphic two-form $\Om$
of the K3 manifold. In terms of the one-form $\lambda$
we have been using, this condition can be
re-expressed as
\be
K_t \wedge K_t = 2 \Om \wedge \bar \Om
= 2 {\lambda \wedge dz} \wedge
{\bar \lambda \wedge  d \bar z} \,.
\ee
\item $\int_{\bar{f}^{-1}(z)} K_t =t$ for any point
$z$ in the base, i.e., the fiber size with respect to
this metric is given by $t$.
Recall that $\bar{f}$ is the projection
map of the fibration.
\een
The semi-flat metric is also a Ricci flat
metric whose K\"ahler form $J_t$ satisfies these
conditions. While $J_t$ is degenerate at the
discriminant locus, $J_t$ and $K_t$ are
closely related --- in fact, $t J_t$ and $t K_t$
have been shown to coincide in the limit $t \ra 0$
\cite{GrossWilson}.
Since $J_t$ approximates $K_t$ well in the
small-$t$ limit, we can expect that there is
a one-to-one correspondence between
the semi-flat harmonic two forms and
harmonic two-forms of the
Calabi-Yau metric in the subspace
$H^2(\bar{M})_\perp$.
More precisely, we can put forth the
following

\medskip

\noindent\textbf{Proposal :} For the stated class
of Calabi-Yau metrics $K_t$, let
$\{ \om_1, \cdots, \om_{20} \}$ denote the linearly
independent harmonic
forms spanning the 20-dimensional subspace
$H^2(\bar{M})_\perp$ of the cohomology group
spanned by elements orthogonal to the classes
$[B]$ and $[F]$. In the limit $t \ra 0$,
these 20 harmonic forms stay linearly independent
and form a basis $\{ \Xi_1, \cdots, \Xi_{20} \}$
of the semi-flat harmonic two-forms.

\medskip

Given that the forms $\{ \om_1 ,\cdots, \om_{20} \}$
do not develop singularities that obstruct the normalizability
condition, it can be shown that $\om_k$ stay linearly
independent. This is done by examining the
duals of the cohomology classes of $\om_k$.
Since the duals of $\{ \om_k \}$
are linearly independent in the homology group,
$\{ \om_k \}$ also remain linearly independent as two-forms.

A crucial test for the validity of the proposal would
be to verify that the limits of $\om_k$
satisfy the orthogonality condition \eq{welldefined}.
This is because orthogonality at the level of
cohomology does not guarantee orthogonality
at the level of forms.
Let us consider
harmonic two-forms $\om_k$ with respect to
the Calabi-Yau metric whose cohomology class
are orthogonal to  $[B]$ and $[F]$,
i.e.,
\be
\int_{\bar{M}} \om_k \wedge \BB_t =
\int_{\bar{M}} \om_k \wedge \FF_t
= 0 \,,
\label{class}
\ee
where $\BB_t $ and $\FF_t$
are harmonic forms in the cohomology classes
$[B]$ and $[F]$.
We have added the subscripts
to emphasize that while the cohomology classes
are defined irrespective of the metric, the harmonic
forms have a metric dependence.
While the conditions \eq{class} do not imply that
the two integrands vanish at every point,
they imply ``half" of these two conditions.
Since the K\"ahler form $K_t$ of the metric
--- which is harmonic ---
is given by a linear combination
of $\BB_t$ and $\FF_t$ by assumption,
and since the Lefschetz action commutes with
the Laplacian operator, it follows that
\be
\om_k \wedge K_t
= 0 \,.
\label{orthhalf}
\ee
It would be interesting to verify that the other half
of the constraint is satisfied as the fiber size is
taken to zero.

A natural way to map the
semi-flat harmonic forms $\{\Xi_k \}$
to the corresponding harmonic forms $\{ \om_k \}$ at finite
$t$ is provided by M-theory/F-theory duality \cite{Vafa:1996xn}.
Let us consider the eight-dimensional theory obtained by
compactifying F-theory on K3 manifold $\bar M$,
and let $a_k$ be the massless vector fields
obtained by KK-reducing the doublet two-form fields
of type IIB along one-forms $\xi^k$ constructed from $\Xi_k$.
The seven-dimensional effective theory obtained
upon further compactification on a $S^1$ of radius
$\sim \alpha'/ t^{-1/2}$ --- where $\alpha'$ is the
Regge slope of the type II string ---
is dual to M-theory
compactified on $\bar M$ with Calabi-Yau metric $K_t$.
The seven-dimensional vector fields $\tilde a_k'$ --- which
are modes of $a_k$ constant along the $S^1$ ---
are obtained by KK-reducing the M-theory three-form
along harmonic forms $\om_k$.
The inverse coupling of the vector fields of the
8D theory are given by the inner-products
of the semi-flat harmonic forms $\Xi_k$ \eq{2formnorm2}.
Meanwhile, upon reduction on a circle,
the corresponding 7D couplings
receive quantum corrections from
charged particles winding around the compactification
circle. The quantum corrected
inverse couplings can be computed by the inner-product
of the harmonic forms $\om_k$ with respect to the
Calabi-Yau metric.
Hence, in this sense,
$\om_k$ can be thought of as a ``quantum corrected"
version of $\Xi_k$.

In the next subsection, we investigate various properties
of the vector fields of the 8D theory whose construction
we have been studying up to this point.
Let us conclude this section by
summarizing
what we have learned so far
about the massless vector field spectrum
of the effective 8D theory of the F-theory
compactification on $\bar M$,
and setting the conventions for the next section:
\ben
\item The massless vector spectrum comes from
reducing the type IIB doublet two-forms along
doublet harmonic one-forms. We denote
the massless vectors $a^k$ and the one-forms
$\xi^k$.
\item There are 20 linearly independent $\xi^k$.
\item The components of
$\xi^k$ can be obtained by integrating
the ``semi-flat harmonic two-forms" $\Xi_k$ of $\bar M$
along the $A$ and $B$-cycles of the elliptic fiber.
\item $\{ \Xi_1, \cdots, \Xi_{20} \}$ are closed two-forms whose
dual two-cycles span $H_2 (\bar{M})_\perp$.
\een

\subsection{Properties of KK-reduced Vector Fields} \label{ss:KK}

In this section, we explore the properties of the
massless vector fields of the 8D theory obtained by
KK-reduction. We first compute the charges of
string junctions under these vector fields. We go on to
relate the vector fields to
seven-brane world-volume vector fields through a
particular CS gauge transformation.
We conclude the section by computing
the Chern-Simons couplings of the 8D effective theory
involving the massless vector fields.

The massive charged states of the 8D theory
come from string junctions stretching inside of the
base manifold $\tilde S$ and ending on the seven-branes.
Any junction with charge vector
$\sigma$ can be represented by a tree of directed
segments $\{ s_l \}$ of $n_l(p_l, q_l)$ strings
--- where $p_l$ and $q_l$ are mutually prime ---
that either begin/end at $(p_l,q_l)$ branes
or junction points.
The segments $\{ s_{l_i} \}$
meeting at the junction points $P_i$ must satisfy the
charge conservation condition
\be
\sum_{l_i \ra P_i} n_{l_i}(p_{l_i},q_{l_i})
-\sum_{l_i \leftarrow P_i} n_{l_i}(p_{l_i},q_{l_i}) =0 \,,
\ee
where the notation $l \ra P$ ($l \leftarrow P$) is used
to denote that the segment $l$ is ending at (emanating from)
the point $P$, respectively.
The charge of any such a junction under the 8D vector
field $a^k$ is given by
\be
q_{\sigma,k} = \sum_l n_l \int_{s_l} (p_l \xi^{k,1} + q_l \xi^{k,2})
\ee
as a $(p,q)$ string couples to the doublet two-form fields
\be
\int_\Sigma (p \pi^* B+q \pi^* C)
\ee
along its world-sheet $\Sigma$. $\pi$ is the
embedding of the world-sheet in space-time.
The charge $q_{\sigma,k}$
can be expressed in terms of $\Xi_k$ as
\be
q_{\sigma,k} = \sum_l \int_{s_l} \int_{n_l p_l \alpha+
n_l q_l \beta}\Xi_k = \int_{C_\sigma} \Xi_k \,,
\label{charge}
\ee
where $C_\sigma$ is the two-cycle of the K3 manifold
that is obtained by ``fattening" the junction.
Hence the electric charge of a junction with
charge vector $\sig$ under the 8D gauge field $a^k$
is given by the topological pairing
between the cycle $C_\sig \in H_2(\bar{M})_\perp$
and the cohomology class $[\Xi_k] \in H^2 (\bar{M})_\perp$.

There is a correspondence between
these vector fields and world-volume vector
fields $A^i$ living on the seven-branes $B_i$.
This correspondence can be established by
``pushing" the eight-dimensional massless
vector modes coming from exciting the two-form
tensor $\xi^k \wedge a^k$ back into the branes via a CS
gauge transformation.
The gauge transformation we use is 
given by $\Lambda = -\varphi^k a^k$
where $\varphi^k$ is a ``doublet" one-form such that
\be
d \varphi^k =\xi^k \,.
\label{varphi}
\ee
The two-form fluctuation
\be
B = \xi^k \wedge a^k , \quad
A^i =0 \,,
\ee
is then gauge equivalent to
\be
B= -\varphi^k d a^k, \quad
A^i =(p_i\varphi^{k,1} (z_i,\bar{z}_i)
+q_i\varphi^{k,2} (z_i,\bar{z}_i)) a^k \,.
\label{akAi0}
\ee
This particular gauge transformation is implemented
so that $B$ does not have components tangent
to the internal directions, so that none of the string
junctions are charged under the $B$ components.

Now the map from $a^k$ to $A^i$ defined by \eq{akAi0}
can be expressed in terms of the topological charges
\eq{charge} either by using Stokes' theorem or
by the following observation.
Given that the fields \eq{akAi0} are turned on,
a massive state of the 8D theory
coming from quantizing a
string junction with charge vector $\sig$
is coupled to the fields via $\sig_i A^i$.
Meanwhile, this is gauge equivalent to
turning only $B=\xi^k \wedge a^k$ on.
Under this field configuration, the 8D state
is coupled to $a^k$ via $q_{\sig,k} a^k$.
Since the two field configurations are gauge
equivalent, the following identity holds:
\be
\left( \int_{C_\sig} \Xi_k \right) a^k = \sig_i A^i \,.
\label{akAi1}
\ee
This identity clearly cannot define a one-to-one
mapping between the gauge fields, as there
are $20$ of the vector fields $a^k$, while there
are $24$ seven-brane vector fields $A^i$.
There is, however, a linear subspace of all
the vector fields $A^i$ that one can identify
with the space of physical massless vector fields
of the eight-dimensional theory.

To identify this subspace, we first observe
that there is an ambiguity in defining the gauge
transformations $\varphi^k$ in \eq{varphi}
--- it is defined up to a constant. As noted in
section \ref{s:stuckelberg}, a CS gauge transformation
$\Lambda$ is allowed as long as the values
$(p_i \Lambda^1 + q_i \Lambda^2)$ at the branes
are well defined --- this allows gauge transformations
constant in the internal directions. Using this, we can
gauge away two linear combinations of $A^i$,
namely $p_i \Lambda$ and $q_i \Lambda'$
without affecting charges of string junctions.
We can therefore project away the linear
combinations $p_iA^i$ and $q_i A^i$, i.e.,
impose
\be
p_i A^i = q_i A^i =0 \,.
\ee

Next, we recall that among the
remaining $22$ linearly independent
charge vectors, there exist two charge vectors
$Z^1$ and $Z^2$
whose corresponding cycles are trivial homologically.
These two vectors are
the charge vectors of ``null junctions."
Hence, for these vectors,
\be
C_{\sigma} \cong C_{\sigma + Z^1}
 \cong C_{\sigma + Z^2} \,.
\ee
Hence for $A^i$ satisfying \eq{akAi1},
it must be that
\be
Z^1_i A^i = Z^2_i A^i = 0 \,.
\ee
Since the cohomology classes of $\Xi_k$
are linearly independent and span the full
space $H^2 (\bar{M})_\perp$, equation \eq{akAi1}
defines a bijective linear map between
the vector space spanned by $a^k$
and a subspace
\be
L = \{ c_i A^i : p_ic_i = q_i c_i = Z^1_i c_i = Z^2_i c_i = 0 \}
\ee
of the 24-dimensional space spanned by $A^i$.
We note that the usual $SO(24)$ invariant Euclidean
inner-product is used in defining this subspace, as it is
inherited from the kinetic term of the gauge fields
\eq{kin24}. This is the advertised correspondence
between bulk and brane vector fields.

Let us end the section with computing
the Chern-Simons couplings of the eight-dimensional
theory
\be
k_{lm} da^l \wedge da^m \wedge \tilde C_4 \,,
\ee
where $\tilde C$ is the 8D
four-form, which is the mode of the
type IIB self-dual four-form $C_4$
that is constant along the internal direction.
This term comes from reducing the
type IIB Chern-Simons term written in
equation \eq{IIB}.
$k_{lm}$ is then given by
\begin{align}
\begin{split}
k_{lm} &=\int_{\bar{S}} \epsilon_{IJ} \xi^{l,I} \wedge \xi^{m,J} \\
&=\int_{\bar{S}} (\int_\a \Xi_l \wedge \int_\b \Xi_m
-\int_\b \Xi_l \wedge \int_\a \Xi_m)
=\int_{\bar{M}} \Xi_l \wedge \Xi_m \,,
\end{split}
\end{align}
which is a topological intersection number of the
cohomology classes $[\Xi_l]$ and $[\Xi_m]$.
This is consistent with the picture of F-theory/M-theory
duality presented in the previous subsection.
Upon reduction along a circle of radius $r$,
one can consider the Chern-Simons coupling
\be
k_{lm} d \tilde a^l \wedge d \tilde a^m \wedge \tilde C_3 \,,
\ee
where $\tilde C_3$ is obtained by reducing
$\tilde C_4$ with one leg around the circle,
and the gauge fields $\tilde a^l$ are modes of
$a^l$ constant around the compact circle.
There are no KK-modes of the 8D fields
that correct this term in obtaining the 7D
effective action --- hence the couplings
$k_{lm}$ remain the same as in the 8D
theory.\footnote{Although absent in our case,
such corrections to topological terms must
be accounted for in general. Such issues
are addressed in
\cite{Goldstone:1981kk,D'Hoker:1984ph}.
These corrections have also recently been
discussed in the context of string compactifications
in \cite{Bonetti:2012fn,Bonetti:2012st,
Cvetic:2012xn,Bonetti:2013ela,Bonetti:2013cza}.}
This is dual to the Chern-Simons coupling of
M-theory compactified on $\bar{M}$
--- the couplings from this point of view
are given by
\be
k_{lm} =\int_{\bar{M}} \om_l \wedge \om_m \,,
\ee
where $\om_l$ are harmonic forms
with respect to the Calabi-Yau metric on
$\bar{M}$ with fiber size $\sim {\alpha'}^2/r^2$.
While the semi-flat harmonic forms $\Xi_l$
become ``quantum corrected" into harmonic
forms $\om_l$, the Chern-Simons couplings $k_{lm}$
are still given by the topological intersection numbers
of the cohomology classes $[\Xi_l]=[\om_l]$,
and thus remain the same.

\section{Future Directions} \label{s:future}

In this paper, we have examined a thoroughly
studied F-theory background through a rather
uncommon approach. Namely, we have examined
K3 compactifications directly from the point of view
of type IIB string theory, only focusing on the degrees of
freedom present there. While this approach did
not reveal anything we did not know
about K3 compactifications, we have demonstrated
that much that we know about them can be recovered
without referring to any dualities.
Hopefully, the methods employed in this paper
can be expanded to more complicated
backgrounds to address problems that are hard to
resolve by using other techniques.
Let us conclude this paper by discussing directions
in which to improve and expand our results.
\vspace{0.1in}

\noindent
{\bf Further Study of Semi-flat Harmonic Forms.}
In section \ref{ss:harm}, we have conjectured
that harmonic forms of an elliptically fibered
K3 manifold $\bar{M}$
sitting inside the subspace $H^2(\bar{M})_\perp$
of the second cohomology group behave ``nicely"
in the semi-flat limit, i.e., the limit the fiber size
of the manifold shrinks to zero.
We have further proposed that, in this limit,
they become harmonic forms with respect to the semi-flat
metric \cite{Greene:1989ya}.
Although this proposal is quite natural from the
point of view of string theory, it seems quite non-trivial
from the perspective of geometry. It would be
interesting to see if this proposal can be proved
with mathematical rigor.

Another interesting direction of research would be to
approach the semi-flat harmonic forms numerically.
The physical quantities associated to massless modes
of F-theory backgrounds are, at least classically,
computed by using semi-flat harmonic forms.
As we have demonstrated in this paper,
these harmonic forms are much simpler beasts than
the forms that are harmonic with respect to the Calabi-Yau
metric. For example, our analysis shows that the Hodge
duals and the norms of the semi-flat harmonic two-forms
are ``well-behaved" in the case of the K3 manifold.
It would be interesting to compute these quantities
explicitly using numerical methods. Hopefully,
these methods can be developed further
to apply to more complicated backgrounds,
which we now discuss.
\vspace{0.1in}

\noindent
{\bf Backgrounds
Constructed from Higher Dimensional Calabi-Yau
Manifolds.}
We have dealt with the simplest non-trivial
F-theory compactification in this paper. Generalizing our
approach to more complicated backgrounds would be interesting.
An immediate generalization would be to understand
F-theory compactified on elliptically fibered Calabi-Yau
threefolds \cite{Morrison:1996na,Morrison:1996pp}.
In this case, the base of the elliptic fibration
$f:\bar{M} \ra \bar{S}$ is two-complex-dimensional.
Although these backgrounds are very well
understood, the abelian gauge symmetry of the low-energy
effective theories of these compactifications are
rather mysterious from the point of view of type IIB string theory.

For example, let us consider a Calabi-Yau threefold
elliptically fibered over $\field{P}^2$. The low-energy effective
theory is a six-dimensional $(1,0)$ supergravity theory.
At a generic point in moduli space, the fiber has an $I_1$
singularity along a degree-$36$ curve in the base
manifold. In the type IIB picture, there is a single seven-brane
wrapping this curve. There are no vector fields in
the massless spectrum of the theory. At various points in the
complex structure moduli space, however,
the number of massless vector fields jump.
The jump can be understood when
there is enhanced non-abelian gauge symmetry ---
in this case multiple branes become coincident and the new
particles added to the massless spectrum
can be understood from the point of view
of the world-volume theory of the coincident branes
\cite{Donagi:2008ca,Beasley:2008dc,
Beasley:2008kw,Donagi:2008kj}.
The picture is less clear when only abelian
vector fields are added to the massless spectrum,
mainly due to global issues.
For example, certain loci of the Calabi-Yau moduli
space with abelian gauge symmetry can be described
by a single brane wrapping a single curve with self-crossings
in the base.\footnote{Some examples of massless abelian
vector fields showing up at such loci
can be found in \cite{Aldazabal:1996du,Klemm:1996ts}.}

The approach described in this paper seems promising
when restricted to understanding theories with
only abelian gauge symmetry
--- a naive generalization of the results of \cite{Zucker}
on computing the cohomology $H^1 (\bar{S},j_* \HH_\RR)$
seems to produce the correct number of massless vector fields.
Some subtleties, however, should be worked out.
For example, the discriminant locus of the fibration
have singular points which are not treated in \cite{Zucker}.
From the physics point of view, one must
formulate the world-volume theory
of the seven-brane wrapping the singular curve, and
also keep track of the behavior of the bulk fields and
their interactions with the world-volume
fields.\footnote{This approach was outlined in
a different context in \cite{Bershadsky:1995qy}.}
It would be interesting to achieve an understanding of
such backgrounds and compute their
massless spectrum from type IIB string theory.
Hopefully, insight gained from this process can
shed light on the more sophisticated Calabi-Yau fourfold
backgrounds relevant to F-theory
phenomenology.\footnote{Recently, there has been work
\cite{Braun:2014nva} that addresses abelian gauge symmetry
of four-dimensional F-theory backgrounds by first taking a
suitable weak-coupling (type IIB) limit of the background
\cite{Clingher:2012rg} and gradually turning on the string
coupling. It would be interesting to understand how this
approach is related to ours.}
\vspace{0.1in}

\noindent
{\bf Singular Backgrounds and Exploration
of Non-Geometric Moduli.}
Another direction to expand our results is to consider
its extension to F-theory compactifications on
singular manifolds that give rise to non-abelian
gauge symmetry. While the massless fields of
type IIB supergravity would not fully account for
the degrees of freedom of these backgrounds,
one can nevertheless ask questions about
the family of deformations of such backgrounds
to gain insight into the singular backgrounds themselves.
Observing how various modes of the
type IIB fields behave in the singular limit might
shed light on novel string backgrounds that are
difficult to probe using dualities.

In particular, the approach of
viewing F-theory backgrounds
from the type IIB perspective may
shed light on non-geometric vacua with
``F-theory fluxes." An interesting class of such vacua
can be described locally in
terms of ``T-branes" coined in \cite{Cecotti:2010bp},
and also studied in
\cite{Donagi:2003hh,Chiou:2011js,
Donagi:2011jy,Donagi:2011dv,Font:2013ida,
Anderson:2013rka}.
From the point of view of the local seven-brane,
T-branes are particular
solutions to the Hitchin-type equations
of the non-abelian fields living on coincident branes.
To be more precise, they are solutions in which the
Higgs field vacuum expectation values are
upper-diagonal. An interpretation of these configurations
in terms of the global geometry of the elliptically
fibered manifold has been given
\cite{Donagi:2011jy,Donagi:2011dv,Anderson:2013rka}.
It would be interesting to see how this geometric
data translates into the type IIB data living on the
base of the elliptic fibration.

\section*{Acknowledgements}

We thank Allan Adams, Francesco Benini,
Nikolay Bobev, Andres Collinucci,
James Gray, Jim Halverson, Chris Herzog,
Kristan Jensen, Denis Klevers, John McGreevy,
Dave Morrison, Sakura Schafer-Nameki, Wati Taylor
and Barton Zwiebach for useful discussions.
D.P. would especially like to thank Denis Klevers
for his comments on the earlier versions of
this paper,
Wati Taylor and Dave Morrison for their
support and encouragement of this work,
and the High Energy Theory Group
at the University of Pennsylvania and
the Caltech Particle Theory Group
for their hospitality while this work
was being carried out.
The work of M.D. and D.P. is supported in
part by DOE grant DE-FG02-92ER-40697.
The work of C.S. is supported by
NSF grant DMS-1331641.

\appendix

\section{Monodromies of Permitted Gauge
Transformations $\varphi$}
\label{ap:gauge}

In this appendix, we show that ``nice" CS gauge
transformations
\be
\xi^k \ra \xi^k + d \varphi
\ee
of a closed, normalizable
$SL(2,\field{Z})$ doublet one-form $\xi^k$
exhibit the monodromies
\be
\begin{pmatrix}
\varphi^1 \\ \varphi^2
\end{pmatrix}
\ra \begin{pmatrix}
1-p_i q_i &-q_i^2 \\
p_i^2 & 1+p_i q_i
\end{pmatrix}
\begin{pmatrix}
\varphi^1 \\ \varphi^2
\end{pmatrix}
\ee
around brane locus $B_i$.
We note that a general ``CS gauge transformation"
is allowed to have shifts
\be
\begin{pmatrix}
\varphi^1 \\ \varphi^2
\end{pmatrix}
\ra  \begin{pmatrix}
1-p_i q_i &-q_i^2 \\
p_i^2 & 1+p_i q_i
\end{pmatrix}
\begin{pmatrix}
\varphi^1 \\ \varphi^2
\end{pmatrix}
+
C_i \begin{pmatrix}
-q_i \\ p_i
\end{pmatrix}
\ee
as one encircles a brane.
By saying that $\varphi$ is a
``nice" gauge transformation, we mean
that it satisfies the following conditions:
\ben
\item $\xi^k + d \varphi$ is normalizable with respect
to the norm \eq{hnorm}.
\item $p_i \varphi^1 + q_i \varphi^2= 0$
at the locus of each $(p_i,q_i)$ brane $B_i$.
\een
The reason for imposing such criteria is
explained in section \ref{ss:zmodes}.

Now given that 
\be
p_i \varphi^1(z_i,\bar z_i) + q_i \varphi^2 (z_i,\bar z_i) = 0\,,
\ee
$\varphi$ near the brane locus has the form
\be
\begin{pmatrix}
\varphi^1 \\ \varphi^2
\end{pmatrix}
=
f(z,\bar z) 
\begin{pmatrix}
-q_i \\ p_i
\end{pmatrix}
+ \text{(vanishing piece as $z \ra z_i$)} \,.
\ee
Here, $f$ is a multivalued function
that exhibits the shift
\be
f \ra f + C_i
\ee
upon rotation around the brane locus.
This implies that in fact
\be
\lim_{\epsilon \ra 0} \oint_{c} df = C_i
\ee
for any contour $c$ encircling the
brane locus $(z_i,\bar z_i)$ close enough.
Hence denoting
\be
df \wedge * df = F(z,\bar z) dz \wedge d\bar z,
\ee
where $z$ is the holomorphic coordinate on the base,
$F$ has a singularity
at the brane locus that behaves like\footnote{One may
worry that the behavior $F$ of is not well-defined,
as it is rescaled upon redefinition of the
holomorphic coordinate. The behavior of the
integration measure $dz \wedge d\bar z$, however,
scales inversely, hence leaving the behavior of the
integrand invariant under coordinate redefinition.}
\be
|F| \geq {C_i^2 \ov 4 \pi^2 |z-z_i|^2} \,.
\ee
Near a given $(p_i,q_i)$ brane $B_i$, the axio-dilaton $\tau$
behaves as
\be
{r_i +s_i \tau \ov p_i+ q_i \tau} \sim {1 \ov 2\pi i} \ln (z-z_i) \,,
\ee
where $r_i$ and $s_i$ are integers such that
\be
p_i s_i -q_i r_i =1 \,.
\ee
It follows that the behavior of the integrand of the norm
\eq{hnorm} is at best given by
\be
\MM_{IJ} d\varphi^I \wedge * d\varphi^J \sim 
{C_i \ov  |z-z_i|^2 \ln |z-z_i|}  dz d\bar z 
\propto
{C_i \ov  r \ln r}  dr d\theta
\ee
near the brane locus $(z_i,\bar{z_i})$.
$(r,\theta)$ are polar coordinates centered
at the brane.
It follows that the mode
$\xi^k + d\varphi$ becomes non-normalizable
upon such a gauge transformation, unless $C_i =0$.
Therefore, in order to keep $\xi^k + d\varphi$
normalizable, all $C_i$ must vanish.
$\Box$

\section{Computation of $h^1(\bar{S},j_* \HH_C)$}
\label{ap:sheafcohom}

Here we provide proofs for the results in section \ref{ss:h1}, based on the work
of Zucker \cite{Zucker}.

\begin{proposition}\label{prop:20}
One has $\dim H^1 \bigl( \Sb, \jl \shHQ \bigr) = 20$, and all other cohomology groups
of the sheaf $\jl \shHQ$ are trivial.
\end{proposition}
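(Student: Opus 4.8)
The plan is to reduce the computation to three elementary ingredients: the vanishing of $H^0$ and $H^2$, and an Euler-characteristic count. Once these are in place, the single possibly-nonzero group satisfies $\dim H^1\bigl(\Sb, \jl\shHQ\bigr) = -\chi\bigl(\Sb, \jl\shHQ\bigr)$. I would establish the vanishing statements first. For $H^0$, note that $H^0\bigl(\Sb, \jl\shHQ\bigr) = H^0(S, \shHQ)$ is the space of global monodromy invariants of the local system. By the theorem of the fixed part (the global invariant cycle theorem applied to $\fb \colon \Mb \to \Sb$), this space is the image of the restriction map $H^1(\Mb, \QQ) \to H^1(E_s, \QQ)^{\pi_1(S)}$; since $\Mb$ is a K3 surface we have $H^1(\Mb, \QQ) = 0$, and therefore $H^0(S, \shHQ) = 0$.

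For $H^2$, I would use the short exact sequence of sheaves on $\Sb$ that presents $\jl\shHQ$ as an extension of a skyscraper sheaf by the extension-by-zero $j_!\shHQ$ (described below). Since the skyscraper has no higher cohomology, the long exact sequence gives $H^2\bigl(\Sb, \jl\shHQ\bigr) \cong H^2\bigl(\Sb, j_!\shHQ\bigr) = H^2_c(S, \shHQ)$. Poincar\'e duality on the open surface $S$ identifies this with $H^0(S, \shHQ^{\vee})^{*}$, and because the variation of Hodge structure is polarized the local system is self-dual, $\shHQ^{\vee} \cong \shHQ$; hence $H^2\bigl(\Sb, \jl\shHQ\bigr) \cong H^0(S, \shHQ)^{*} = 0$ by the previous step.

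It then remains to compute $\chi\bigl(\Sb, \jl\shHQ\bigr)$. The key geometric input is that every singular fiber is of type $I_1$, so the local monodromy $T_i$ around each of the $24$ punctures $s_i$ is a single Picard--Lefschetz transvection, and its invariant space $(\shHQ)^{T_i}$ is one-dimensional, spanned by the vanishing cycle. Consequently the quotient $\jl\shHQ / j_!\shHQ$ is the skyscraper sheaf $\bigoplus_{i=1}^{24} \iota_{i*}(\shHQ)^{T_i}$, where $\iota_i \colon \{s_i\} \into \Sb$, and the exact sequence
\[
0 \longrightarrow j_!\shHQ \longrightarrow \jl\shHQ
  \longrightarrow \bigoplus_{i=1}^{24} \iota_{i*}(\shHQ)^{T_i} \longrightarrow 0
\]
gives $\chi\bigl(\Sb, \jl\shHQ\bigr) = \chi_c(S, \shHQ) + 24$. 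By multiplicativity of the compactly supported Euler characteristic for local systems, $\chi_c(S, \shHQ) = (\operatorname{rank}\shHQ)\cdot \chi_c(S) = 2\,(2 - 24) = -44$, so $\chi\bigl(\Sb, \jl\shHQ\bigr) = -44 + 24 = -20$. Combining this with the vanishing of $H^0$ and $H^2$ yields $\dim H^1\bigl(\Sb, \jl\shHQ\bigr) = 20$, as claimed; note that this is consistent with the stated isomorphism $H^2(\Mb, \QQ)_{\perp} \to H^1\bigl(\Sb, \jl\shHQ\bigr)$, since $22 - 2 = 20$.

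The place where care is needed is the vanishing of $H^0$: it is precisely what forces the answer to be $20$ rather than the naive $44$, and it genuinely uses global information about the K3 surface (the invariant cycle theorem together with $b_1(\Mb) = 0$) rather than any local model near a puncture. I would also want to verify the two supporting facts that make the sheaf sequence and the count valid: that $R^1\fbl\QQ = \jl\shHQ$ for this fibration — which holds because the nodal $I_1$ fibers have $H^1$ of dimension one, matching $(\shHQ)^{T_i}$ through the local invariant cycle theorem — and the multiplicativity of $\chi_c$ for local systems on $S$. The Hodge-theoretic refinements of \cite{Zucker} are not actually needed for the dimension count itself; they are what later identify this $20$-dimensional group with the space of square-integrable harmonic doublet one-forms.
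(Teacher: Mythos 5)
Your proof is correct, and it takes a genuinely different route from the paper's. The paper quotes Lemma~1.2 of \cite{CoxZucker} for the identification $R^1 \fbl \QQ \simeq \jl \shHQ$ and for the vanishing of $H^0$ and $H^2$, and then reads off $\dim H^1 = 22 - 2 = 20$ from the degeneration of the Leray spectral sequence of $\fb \colon \Mb \to \Sb$ together with the known Betti numbers of a K3 surface. You instead work directly with the sheaf $\jl \shHQ$ on $\Sb$: you prove the $H^0$ vanishing from the global invariant cycle theorem and $b_1(\Mb)=0$, deduce the $H^2$ vanishing from Poincar\'e duality on $S$ plus the self-duality of the polarized local system, and get the dimension from an Euler-characteristic count using the sequence $0 \to j_! \shHQ \to \jl \shHQ \to \bigoplus_i \iota_{i*}(\shHQ)^{T_i} \to 0$ and multiplicativity of $\chi_c$. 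All the steps check out: the stalk of $\jl\shHQ$ at a puncture is indeed the one-dimensional local invariant space of the Picard--Lefschetz transvection, and $\chi_c(S,\shHQ) = 2(2-24) = -44$ gives $\chi(\Sb,\jl\shHQ) = -20$. What your approach buys is self-containedness and transparency about where the correction from the naive $44$ to $20$ comes from (the $24$ local invariants and the global $H^0$/$H^2$ vanishing); its only geometric inputs are $b_1(\Mb)=0$ and the count of $I_1$ fibers. What the paper's approach buys is that the Leray degeneration simultaneously produces the isomorphism $H^2(\Mb,\QQ)_\perp \simeq H^1(\Sb,\jl\shHQ)$ of polarized Hodge structures, which is needed for the corollary that follows; your argument establishes the dimension but would still require that spectral-sequence comparison for the later identification.
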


\begin{proof}
According to Lemma~1.2 of \cite{CoxZucker}, we have $R^1 \fbl \QQ \simeq
\jl \shHQ$, as well as
\[
	H^0(\Sb, \jl \shHQ) = H^2(\Sb, \jl \shHQ) = 0.
\]
Now the idea is to use the Leray spectral sequence
\begin{equation} \label{eq:Leray}
	E_2^{p,q} = H^p(\Sb, R^q \fbl \QQ) \Longrightarrow H^{p+q}(\Mb, \QQ)
\end{equation}
to compare the vector space in question to the cohomology of the K3-surface. 
We observe that
\[
	R^0 \fbl \QQ \simeq R^2 \fbl \QQ \simeq \QQ,
\]
because the cohomology in degree $0$ and $2$ is one-dimensional for all fibers of
$\fb$, including the $24$ singular ones. The $E_2$-page of the spectral sequence
therefore has the following shape:
\[
\begin{tikzcd}[column sep=1ex,row sep=0]
\QQ & 0 & \QQ \\
0 & H^1(\Sb, \jl \shHQ) & 0 \\
\QQ & 0 & \QQ
\end{tikzcd}
\]
It is obvious that the spectral sequence degenerates at $E_2$; in fact,
by Corollary~15.15 of \cite{Zucker}, this is always the case.
Because we know the cohomology of a K3-surface, we get the desired result.
\end{proof}

The degeneration of the Leray spectral sequence in \eqref{eq:Leray} has the following
additional consequence.

\begin{corollary}
The Leray spectral sequence induces an isomorphism
\[
	H^2(\Mb, \QQ)_{\perp} \simeq H^1 \bigl( \Sb, \jl \shHQ \bigr),
\]
which is in fact an isomorphism of polarized Hodge structures of weight $2$.
\end{corollary}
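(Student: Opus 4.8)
The plan is to extract the isomorphism directly from the Leray filtration on $H^2(\Mb,\QQ)$ and then match the Hodge data through the multiplicative structure of the spectral sequence and Zucker's theorems. First I would record the consequences of the degeneration at $E_2$ established in the proof of Proposition \ref{prop:20}. The Leray filtration $0 = F^3 \subseteq F^2 \subseteq F^1 \subseteq F^0 = H^2(\Mb,\QQ)$ then has graded pieces
\[
	\mathrm{Gr}^2 = E_2^{2,0} = \QQ, \quad
	\mathrm{Gr}^1 = E_2^{1,1} = H^1\bigl(\Sb, \jl\shHQ\bigr), \quad
	\mathrm{Gr}^0 = E_2^{0,2} = \QQ,
\]
of dimensions $1$, $20$, $1$. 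I would identify $F^2 = E_2^{2,0}$ with the line $\QQ[F]$ spanned by the fiber class $\fb^*[\mathrm{pt}]$, and the edge projection $H^2(\Mb,\QQ) \to \mathrm{Gr}^0 = H^0(\Sb, R^2\fbl\QQ)$ with restriction to a general fiber, so that the section class $[B]$ maps to a generator of $\mathrm{Gr}^0$.

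The central step uses the intersection form together with the fact that cup product respects the Leray filtration, $F^p \cup F^{p'} \subseteq F^{p+p'} H^4$, while $H^4(\Mb,\QQ) = E_2^{2,2} = F^2 H^4$ is concentrated in filtration level $2$. Hence $F^2 \cup F^1 \subseteq F^3 H^4 = 0$, so on $H^2$ the intersection form pairs $\mathrm{Gr}^2$ with $\mathrm{Gr}^0$ and $\mathrm{Gr}^1$ with itself, and is block-antidiagonal for the filtration. Since $[F]^2 = 0$ while $[F]\cdot[B] = 1$, the class $[F]$ pairs nondegenerately with the generator of $\mathrm{Gr}^0$; equivalently, $\alpha \cdot [F] = \int_{E_s}\alpha$ records the $\mathrm{Gr}^0$-component of $\alpha$, so $\alpha\cdot[F] = 0$ holds precisely when $\alpha \in F^1$. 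In particular the $20$-dimensional space $H^2(\Mb,\QQ)_\perp$, being orthogonal to $[F]$, lies in $F^1$, and it meets $F^2 = \QQ[F]$ only in $0$ because $[F]\cdot[B]\neq 0$. A dimension count gives $F^1 = \QQ[F]\oplus H^2(\Mb,\QQ)_\perp$, so the composite $H^2(\Mb,\QQ)_\perp \hookrightarrow F^1 \twoheadrightarrow \mathrm{Gr}^1$ is a linear isomorphism
\[
	H^2(\Mb,\QQ)_\perp \xrightarrow{\ \sim\ } E_2^{1,1} = H^1\bigl(\Sb, \jl\shHQ\bigr);
\]
this is the natural morphism of the statement.

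Finally I would upgrade this to an isomorphism of polarized Hodge structures of weight $2$. By the results of \cite{Zucker}, the Leray spectral sequence \eqref{eq:Leray} is one of (polarized) Hodge structures, so the Leray filtration is by sub-Hodge structures and $\mathrm{Gr}^1 = E_2^{1,1}$ carries exactly Zucker's weight-$2$ structure on $H^1(\Sb,\jl\shHQ)$; the isomorphism above is therefore an isomorphism of Hodge structures. For the polarizations I would use multiplicativity: under $R^1\fbl\QQ \simeq \jl\shHQ$ the fiberwise cup product $R^1\fbl\QQ \otimes R^1\fbl\QQ \to R^2\fbl\QQ = \QQ$ is the polarization $Q$ of the weight-$1$ variation (the intersection form on $H^1$ of the fibers), so the pairing induced on $E_2^{1,1}$ is precisely Zucker's cup-product polarization, while on $H^2(\Mb,\QQ)_\perp$ it is the restriction of the K3 intersection form. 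Since our isomorphism was obtained from a filtration compatible with cup product, these two forms correspond, and the Hodge--Riemann relations on $\Mb$ confirm that the restricted intersection form polarizes $H^2(\Mb,\QQ)_\perp$. The hard part is this last matching: it requires that the Leray spectral sequence carry a Hodge-theoretic structure even though $\fb$ is \emph{not} smooth (so Deligne's degeneration theorem does not apply directly), and this is exactly the content supplied by Zucker's $L^2$-Hodge theory for the intermediate-extension sheaf $\jl\shHQ$.
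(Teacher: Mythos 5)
Your proof is correct and follows essentially the same route as the paper: degeneration of the Leray spectral sequence at $E_2$, identification of $\mathrm{Gr}^1$ with $H^1(\Sb,\jl\shHQ)$, the splitting $F^1 H^2(\Mb,\QQ) = \QQ[F]\oplus H^2(\Mb,\QQ)_\perp$, and an appeal to Zucker for the compatibility of polarized Hodge structures. The only difference is that you spell out, via multiplicativity of the spectral sequence and $\alpha\cdot[F]=\int_{E_s}\alpha$, why $H^2(\Mb,\QQ)_\perp$ lies in the first step of the Leray filtration and meets $\QQ[F]$ trivially --- a point the paper's proof asserts without comment.
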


\begin{proof}
We denote by 
\[
	L^1 H^2(\Mb, \QQ) 
		= \ker \Bigl( H^2(\Mb, \QQ) \to H^0 \bigl( \Sb, R^2 \fbl \QQ \bigr) \Bigr)
\]
the subspace of cohomology classes that restrict trivially to all the fibers of
$\fb$.  Because \eqref{eq:Leray} degenerates at $E_2$, this subspace has dimension
$21$, and the natural mapping
\[
	L^1 H^2(\Mb, \QQ) \to H^1 \bigl( \Sb, R^1 \fbl \QQ \bigr)
\]
is surjective, with kernel
\[
	L^2 H^2(\Mb, \QQ) = H^2 \bigl( \Sb, \fbl \QQ \bigr) \simeq H^2(\Sb, \QQ).
\]
As mentioned above, $H^1 \bigl( \Sb, R^1 \fbl \QQ \bigr) \simeq H^1 \bigl( \Sb, \jl
\shHQ \bigr)$; this clearly implies the desired isomorphism because
\[
	L^1 H^2(\Mb, \QQ) = H^2(\Mb, \QQ)_{\perp} \oplus L^2 H^2(\Mb, \QQ). 
\]
The isomorphism respects the polarized Hodge structures on both sides according to
the results in Section~15 of \cite{Zucker}.
\end{proof}

Let us also compute the dimension of $H^1(S, \shHQ)$. By the Picard-Lefschetz
formula, the local monodromy around each of the $24$ points of the discriminant locus
is given by the matrix
\[
	\begin{pmatrix}
		1 & 1 \\
		0 & 1
	\end{pmatrix}
\]
in a suitable basis for the first cohomology of a nearby elliptic curve. It follows
that the sheaf $R^1 \jl \shHQ$ is supported on the discriminant locus, and that its
stalk at each of the $24$ points is equal to $\QQ$. For dimension reasons, the Leray
spectral sequence
\[
	E_2^{p,q} = H^p \bigl( \Sb, R^q \jl \shHQ \bigr) 
		\Longrightarrow H^{p+q}(S, \shHQ),
\]
degenerates at $E_2$; this leads to a short exact sequence
\[
	0 \to H^1 \bigl( \Sb, \jl \shHQ \bigr) \to H^1(S, \shHQ) 
		\to \QQ^{24} \to 0.
\]
It follows that $\dim H^1(S, \shHQ) = 20 + 24 = 44$.

\section{The Thomae Formula}
\label{ap:thomae}

We write components of the semi-flat metric
in terms of the unique holomorphic one-form
\be
\lambda = {dx \ov y} .
\ee
In particular, we show that
\be
g_{z\bar z}={\tau_2 |\eta(\tau)|^4  \ov |\Delta|^{1/6}} 
= {2^{1/6} \ov 4\pi^2 i} \int \lb \wedge \l \,.
\label{thomae}
\ee
Here, $g_{z\bar z}$ is the metric of the elliptically
fibered K3 manifold
in the limit the fiber shrinks to zero-size.
As is always,
\be
\tau =\tau_1 + i \tau_2
={\int_\beta \lambda \ov \int_\alpha \lambda}
\ee
for a choice of $A$ and $B$-cycles, $\alpha$
and $\beta$.
The integral in \eq{thomae} is taken along the elliptic fiber.
Recall that the local equation defining the K3 manifold
is given by
\be
y^2 =x^3 + f_8 (z) x + g_{12} (z)
\ee
where $z$ is the base coordinate.
The discriminant $\Delta$ is defined to be
\be
\Delta \equiv 4f_8^3 + 27 g_{12}^2 \,.
\ee

Now $\tau$ satisfies the relation
\be
{(\theta_2(\tau)^8 + \theta_3(\tau)^8 + \theta_4(\tau)^8)^3 \ov \eta(\tau)^{24}}=
j(\tau) = 1728 \times {4f_8^3 \ov \Delta} \,.
\ee
Hence one finds that
\be
{|\eta(\tau)|^{4} \ov |\Delta|^{1/6}} =
6912^{-1/6} {|\theta_2(\tau)^8 + \theta_3(\tau)^8 + \theta_4(\tau)^8|^{1/2} \ov |f_8|^{1/2}}\,.
\ee
Meanwhile, by the Thomae formula,
\be
\theta_2(\tau)^8 + \theta_3(\tau)^8 + \theta_4(\tau)^8
={6 \ov \pi^4 } (\int_\alpha \l )^4 f_8 \,.
\ee
Note that both sides of this equation vary with
respect to modular transformations that come from
choosing of different
$A$ and $B$-cycles.
We therefore arrive at
\be
{|\eta(\tau)|^{4} \ov |\Delta|^{1/6}} =
32^{-1/6} \pi^{-2}
|\int_\alpha \l |^2\,.
\ee
Since
\be
\tau_2 =  {1 \ov 2i} {\int \lb \wedge \l \ov |\int_\alpha \l |^2} \,,
\ee
--- where we have used the fact that
\be
\int_\alpha \lb \int_\beta \l-\int_\alpha \l \int_\beta \lb
= \int \lb \wedge \l
\ee
--- we obtain
\be
g_{z \bar z} =
\sqrt{g} = {2^{1/6} \ov 4\pi^2 i} \int \lb \wedge \l \,.
\ee

\section{Bijection Between Semi-flat Harmonic
Two-forms of ${M}$\\
and Doublet Harmonic One-forms on $\bar{S}$}
\label{ap:dual}

In this section, we examine the properties of
semi-flat harmonic two-forms $\Xi$ living in 
${M}$ --- the open manifold obtained by excising
the degenerate fibers of the elliptically fibered K3 manifold
$\bar{M}$ --- and their integrals
along cycles of the fiber. We eventually prove that
there is a bijection between these semi-flat
harmonic two-forms defined in section \ref{ss:harm}
and the doublet harmonic one-forms
defined in section \ref{ss:zmodes}.
We use coordinate conventions
defined in section \ref{ss:harm}.

Let us begin by proving the following

\begin{proposition}
Given the one-form
$\xi$ constructed from the semi-flat
harmonic two-form $\Xi$ via \eq{xi}, i.e.,
\be
\xi =
\begin{pmatrix}
\int_{\alpha} \Xi \\
\int_{\beta}  \Xi
\end{pmatrix} \,,
\label{apxi}
\ee
the ``dual" satisfies \eq{xidual}:
\be
*\MM_{IJ} \xi^J=
\begin{pmatrix}
\int_{\beta} *_{sf}\Xi \\
-\int_{\alpha}  *_{sf}\Xi
\end{pmatrix} \,.
\ee
\end{proposition}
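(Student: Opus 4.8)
The plan is to reduce everything to the flat geometry of a single elliptic fiber, where the Hodge star is completely explicit, and then to recognise the matrix $\MM$ as the period description of the fiberwise Hodge star. Concretely, I would first decompose $\Xi$ in the semi-flat coframe: since $\Xi_{z\bz}=\Xi_{w\bar w}=0$ by hypothesis, one can write
\be
\Xi = A\,\th^z\wedge\th^w + B\,\th^z\wedge\th^{\bar w}+C\,\th^{\bar z}\wedge\th^w + D\,\th^{\bar z}\wedge\th^{\bar w}\,,
\ee
with $\th^z=dz$. The first thing to establish is that, because $\Xi$ is harmonic for the semi-flat (fiberwise flat) metric, the coefficients $A,B,C,D$ are constant along each fiber; this is what lets the fiber integrals $\inta\Xi$ and $\intb\Xi$ behave like ordinary multiplication by the periods $\omega_\a=\inta\l$ and $\omega_\b=\intb\l$ (using $\th^w|_{\text{fiber}}=\l$), so that $\xi^1$ and $\xi^2$ are simply the listed base one-forms.

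The second step is to factorise the Hodge star. Since the semi-flat metric is block diagonal in the coframe $\{\th^z,\th^{\bar z},\th^w,\th^{\bar w}\}$, and since the Hodge star acting on one-forms on a real surface is conformally invariant, $*_{sf}$ acting on a wedge of a base one-form with a fiber one-form factorises as the (sign-weighted) product of the base star $*$ and the fiber star $*_f$. On the fiber the star is elementary: $*_f\th^w=-i\,\th^w$ and $*_f\th^{\bar w}=+i\,\th^{\bar w}$, so $*_{sf}$ leaves the $(2,0)$ and $(0,2)$ pieces (the $A$ and $D$ terms) fixed and flips the sign of the primitive $(1,1)$ pieces (the $B$ and $C$ terms). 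Integrating the resulting form over $\a$ and $\b$ then produces $\intb *_{sf}\Xi$ and $-\inta *_{sf}\Xi$ as explicit base one-forms, again built from $\omega_\a$ and $\omega_\b$.

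The final step is purely algebraic: I would compute $*\MM_{IJ}\xi^J$ directly, using the explicit $\MM$ of \eq{mij}, the base-star values $*\,dz=-i\,dz$ and $*\,d\bz=i\,d\bz$, and the identity $\omega_\b=\tau\,\omega_\a$ that defines $\tau=\tau_1+i\tau_2$. Substituting $\tau=\tau_1+i\tau_2$ collapses the combinations $|\tau|^2\omega_\a-\tau_1\omega_\b$ and $-\tau_1\omega_\a+\omega_\b$ to $\tau_2$ times simple period expressions, and one then reads off that $*\MM_{IJ}\xi^J$ reproduces exactly the doublet with components $\intb *_{sf}\Xi$ and $-\inta *_{sf}\Xi$. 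The antisymmetric swap $(\inta,\intb)\mapsto(\intb,-\inta)$ encodes the intersection form $\epsilon_{IJ}$ of the fiber, and its composition with $\MM$ is precisely the fiberwise Hodge star written in the cycle basis, while the base star $*$ on the left matches the base factor of $*_{sf}$.

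I expect the main obstacle to be bookkeeping rather than anything conceptual: getting every orientation and ordering convention consistent so that the identity holds on the nose and not merely up to a global sign. The product-manifold sign in the factorisation of $*_{sf}$, the fiber-integration ordering implicit in $\inta$ and $\intb$, and the signs of the base and fiber stars all have to be fixed compatibly; a naive choice reproduces the stated identity up to one overall sign, which is then pinned down by the paper's conventions. The only genuinely non-formal input is the fiber-constancy of $A,B,C,D$, i.e.\ that a semi-flat-harmonic two-form with vanishing $z\bz$ and $w\bar w$ components is constant along the flat fibers; I would justify this by a fiberwise Fourier/Hodge argument on the flat torus together with normalizability, or defer it to the companion computations in this appendix.
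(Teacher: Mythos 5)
Your proposal is correct and follows essentially the same route as the paper's appendix \ref{ap:dual}: decompose $\Xi$ in the semi-flat coframe, show the coefficients are fiber-constant, use the block structure of $*_{sf}$ on the $(2,0)/(0,2)$ versus primitive $(1,1)$ pieces, and match against the explicit period form of $\MM_{IJ}$ (the sign ambiguity you flag is exactly the normalization of $\epsilon_{w\bar w z\bar z}$ the paper footnotes away). The one step you defer --- fiber-constancy --- is established in the paper not via normalizability but by Hodge-decomposing the fiber part into harmonic plus exact and using both closedness conditions $d\Xi=0$ and $d*_{sf}\Xi=0$ together with the fact that harmonic functions on a compact torus are constant; this is the fiberwise Hodge argument you anticipate.
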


\begin{proof}
A closed two-form $\Xi$ whose
components $\Xi_{z \bar z}$ and
$\Xi_{w \bar w}$ vanish can be
locally written as
\be
\Xi = \om \wedge dz
+ \bar\om\wedge  d \bar z
+ f dz d \bar z \,.
\label{Xi1}
\ee
Here, $\om$ is a one-form aligned
along the fiber direction so that
\be
\om = a d\zeta + b d \bar \zeta
=a \lambda + b \bar \lambda
\ee
for some functions $a$ and $b$.
$a$, $b$ and $f$ are functions
that can {\it a priori} depend on
any coordinate. The fact
that $\Xi$ is closed imposes that
$\om$ is also closed along the fiber
direction.
Since $\l$ and $\lb$ span the space of
closed one-forms of the elliptic fiber,
$\om$ can be re-written as
\be
\om = A \lambda + B \bar \lambda
+ (\p_\zeta C d \zeta+ \p_{\bar \zeta }C d \bar \zeta)
\ee
where $C$ is a function that can depend
on the fiber coordinates $\zeta$ and $\bar \zeta$,
while $A$ and $B$ only depend on the
base coordinates $z$ and $\bar z$.
It then follows that
$\xi$ is given by
\begin{align}
\begin{split}
\xi &=
A(z,\bar z)\begin{pmatrix}
\int_{\alpha} \l \\
\int_{\beta}  \l
\end{pmatrix} dz 
+B(z,\bar z)\begin{pmatrix}
\int_{\alpha} \lb \\
\int_{\beta}  \lb
\end{pmatrix} d z \\
&+ \bar
A(z,\bar z)\begin{pmatrix}
\int_{\alpha} \lb \\
\int_{\beta} \lb
\end{pmatrix} d\bar z 
+\bar B(z,\bar z)\begin{pmatrix}
\int_{\alpha} \l \\
\int_{\beta}  \l
\end{pmatrix} d \bar z \,.
\end{split}
\label{beforelift}
\end{align}

Since 
\be
\MM_{IJ} \equiv {1 \ov \tau_2}
\begin{pmatrix}
|\tau|^2 & -\tau_1 \\
-\tau_1 & 1
\end{pmatrix}
\ee
for
\be
\tau =\tau_1 + i \tau_2
={\int_\beta \lambda \ov \int_\alpha \lambda} \,,
\ee
$\MM_{IJ}$ can be rewritten as
\be
\MM_{IJ}
={i \ov \int \lb \wedge \l}
 \begin{pmatrix}
2|\intbl|^2 & -(\intal \intblb+\intalb \intbl) \\
-(\intal \intblb+\intalb \intbl) & 2|\intal|^2
\end{pmatrix} \,.
\ee
Therefore the dual of the doublet one-form
$\xi$ as defined in \eq{dual} is given by
\begin{align}
\begin{split}
*\MM_{IJ} \xi^J=
&-A(z,\bar z)\begin{pmatrix}
\int_{\beta} \l \\
-\int_{\alpha}  \l
\end{pmatrix} dz 
+B(z,\bar z)\begin{pmatrix}
\int_{\beta} \lb \\
-\int_{\alpha}  \lb
\end{pmatrix} dz \\
&-\bar A(z,\bar z)\begin{pmatrix}
\int_{\beta} \lb \\
-\int_{\alpha}  \lb
\end{pmatrix} d\bar z 
+\bar B(z,\bar z)\begin{pmatrix}
\int_{\beta} \l \\
-\int_{\alpha}  \l
\end{pmatrix} d\bar z \,,
\end{split}
\label{dual1form}
\end{align}
for a particular normalization of
the antisymmetric two-tensor
$\epsilon_{z \bar z}$.

Let us proceed to show
that equation \eq{dual1form} can be identified with
\be
\begin{pmatrix}
\int_{\beta} *_{sf}\Xi \\
-\int_{\alpha}  *_{sf}\Xi
\end{pmatrix} \,.
\ee
Since $\Xi_{z \bar z} = \Xi_{w \bar w} =0$,
the Hodge dual $*_{sf} \Xi$ also have
vanishing components along these directions,
as the semi-flat metric factors in the base and
fiber directions.
Hence $*_{sf} \Xi$ can be
written in the form
\be
*_{sf} \Xi = \om' \wedge dz + \bar \om' \wedge d \bar z
+ f' dz d\bar z \,.
\ee
As before, since $*_{sf}\Xi$ is also closed,
$\om'$ takes the form
\be
\om' = A' \lambda +B' \bar \lambda+
(\p_\zeta C' d \zeta+ \p_{\bar \zeta }C' d \bar \zeta)
\ee
where $A'$ and $B'$ are functions independent
of the $\zeta$ and $\bar \zeta$ coordinates.
Meanwhile, applying the Hodge dual to
the expression \eq{Xi1} shows that
\be
A+ \p_\zeta C = -A'- \p_\zeta C',\quad
B+ \p_{\bar \zeta} C = B'+ \p_{\bar \zeta} C' \,,
\ee
upon a particular normalization of the antisymmetric
four-tensor
$\epsilon_{w\bar w z \bar z}$.\footnote{Such choices only
affect overall factors of the equations --- harmonicity
conditions remain the same.}
Taking partial derivatives with respect to $\bar \zeta$
and $\zeta$ on the first and second equation, respectively,
we arrive at
\be
\p_\zeta \p_{\bar \zeta} C =
\p_\zeta \p_{\bar \zeta} C' = 0 \,.
\ee
Since the only harmonic function on a compact
elliptic curve is the constant function, it must be that
\be
\p_\zeta C =  \p_{\bar \zeta} C =
\p_\zeta C' = \p_{\bar \zeta} C' = 0 \,.
\ee
It also follows that
\be
\om' = -A \lambda + B \lb \,.
\ee
Hence we arrive at
\begin{align}
\begin{split}
\begin{pmatrix}
\int_{\beta} *_{sf}\Xi \\
-\int_{\alpha}  *_{sf}\Xi
\end{pmatrix}=
&-A(z,\bar z)\begin{pmatrix}
\int_{\beta} \l \\
-\int_{\alpha}  \l
\end{pmatrix} dz 
+B(z,\bar z)\begin{pmatrix}
\int_{\beta} \lb \\
-\int_{\alpha}  \lb
\end{pmatrix} dz \\
&-\bar A(z,\bar z)\begin{pmatrix}
\int_{\beta} \lb \\
-\int_{\alpha}  \lb
\end{pmatrix} d\bar z 
+\bar B(z,\bar z)\begin{pmatrix}
\int_{\beta} \l \\
-\int_{\alpha}  \l
\end{pmatrix} d\bar z
\end{split}
\end{align}
upon integration along cycles.
This coincides with \eq{dual1form}.
\end{proof}

A corollary of this proof
is that a semi-flat harmonic two-form
takes the form
\be
\Xi = A(z,\bar z) \th^w \wedge \th^z
+B(z,\bar z) \th^{\bar w} \wedge \th^z
+\bar B(z,\bar z) \th^w \wedge \th^{ \bar z}
+\bar A(z,\bar z) \th^{\bar w} \wedge \th^{\bar z} \,,
\label{Xiform}
\ee
while the one-forms $\xi$ constructed from $\Xi$
via \eq{apxi} then take the form \eq{beforelift}, i.e.,
\begin{align}
\begin{split}
\xi &=
A(z,\bar z)\begin{pmatrix}
\int_{\alpha} \l \\
\int_{\beta}  \l
\end{pmatrix} dz 
+B(z,\bar z)\begin{pmatrix}
\int_{\alpha} \lb \\
\int_{\beta}  \lb
\end{pmatrix} d z \\
&+ \bar
A(z,\bar z)\begin{pmatrix}
\int_{\alpha} \lb \\
\int_{\beta} \lb
\end{pmatrix} d\bar z 
+\bar B(z,\bar z)\begin{pmatrix}
\int_{\alpha} \l \\
\int_{\beta}  \l
\end{pmatrix} d \bar z \,.
\end{split}
\label{xiform}
\end{align}
It is clear from these equations that
two linearly independent doublet harmonic
two-forms cannot yield the same doublet harmonic
one-form via \eq{apxi}.
Hence the bijection between doublet harmonic one-forms
and semi-flat harmonic two-forms would follow if
the following are true:
\ben
\item All doublet harmonic one-forms
are of the form \eq{xiform}.
\item Given two-forms $\Xi$ defined by
\eq{Xiform} and a doublet one-form $\xi$
defined by \eq{xiform} for the same
functions $A(z,\bar{z})$ and $B(z,\bar{z})$,
$\Xi$ is harmonic if and only if $\xi$
is harmonic.
\een

It is simple to verify that the first claim is true.
All doublet harmonic one-forms must satisfy
the appropriate monodromy conditions around
discriminant loci. The one-forms
\begin{align}
\begin{pmatrix}
\int_{\alpha} \l \\
\int_{\beta}  \l
\end{pmatrix} dz, ~~~
\begin{pmatrix}
\int_{\alpha} \lb \\
\int_{\beta}  \lb
\end{pmatrix} d z , ~~~
\begin{pmatrix}
\int_{\alpha} \lb \\
\int_{\beta} \lb
\end{pmatrix} d\bar z , ~~~
\begin{pmatrix}
\int_{\alpha} \l \\
\int_{\beta}  \l
\end{pmatrix} d \bar z \,,
\end{align}
are linearly independent
doublet one-forms that satisfy these
monodromies and are non-vanishing at
all points of $S$, i.e., points on the base
other than the discriminant loci.
We note that the components
of these one-forms are linearly
independent at each point in $S$,
as the torus is non-degenerate at
these points.
Hence any real one-form that satisfies
monodromies defined by the fibration
must be of the form \eq{xiform} for some
complex functions $A(z,\bar z)$ and
$B(z,\bar z)$.\footnote{There is a subtlety
that must be discussed here. Although
the forms $(\int_{c} \l) d\bar z $ and
$(\int_{c} \lb) d z $ --- for some cycle $c$
--- do not vanish at any point in $S$,
they are not well defined at the point that
corresponds to $z = \infty$ in the local
patch of $S$ we are working in. In fact,
taking $\tilde z = -1/z = \tilde{r} e^{i\tilde \th}$,
they behave as $\sim e^{4 i\tilde \th}d\bar{ \tilde z}$
and $\sim e^{-4 i\tilde \th}d\tilde z$, respectively.
Hence conditions on the behavior of $B$
must additionally be imposed for the doublet
one-form \eq{xiform}
to be well-behaved at this point.
Not coincidentally, this is precisely the behavior
one must impose to make $\Xi$ defined by
\eq{Xiform} to be well behaved at $\tilde z =0$,
and our following arguments go through.}

Showing the second claim is a matter
of algebra. Using the fact that
\be
d \theta^w = -{( \p \l, \bar \l) \ov (\l , \bar \l)}
\th^w \wedge \th^z
+{( \p \l,  \l) \ov (\l , \bar \l)}
\th^{\bar w} \wedge \th^z \,,
\ee
the harmonicity condition
\be
d\Xi=0, \quad d *_{sf} \Xi =0
\ee
of $\Xi$ can be shown to be equivalent to
the equations
\be
\p_{\bar z}A =0, \quad
\p_z \bar B - {( \bar\p \bar\l, \bar \l) \ov (\l , \bar \l)} B
+{( \p \l, \bar \l) \ov (\l , \bar \l)} \bar B =0 \,,
\label{harmab}
\ee
and its complex conjugates.
Meanwhile, the harmonicity conditions
\be
d\xi^I =0 ,\quad d*\MM_{IJ} \xi^J=0
\ee
are equivalent to
\be
\tilde\Lambda^I d\xi^I =0, \quad
\bar{\tilde\Lambda}^I d\xi^I =0 ,\quad
\Lambda^I d*\MM_{IJ} \xi^J=0,\quad
\bar{\Lambda}^I d*\MM_{IJ} \xi^J=0 \,,
\label{oneformharm}
\ee
for the basis of non-vanishing $SL(2,\field{Z})$
contravariant/covariant doublet scalars\footnote{These
doublets are non-vanishing at
every point of $S$ except for $z = \infty$.
Given that the behavior of $A$ and $B$ are
specified at this point, it is enough to check
the equations that they satisfy
at $S \setminus \{z = \infty \}$
for our argument to hold.}
$\tilde\Lambda$, $\bar{\tilde\Lambda}$
and $\Lambda$, $\bar\Lambda$.
$\Lambda$ and $\tilde\Lambda$
can be explicitly written as
\be
\Lambda \equiv \begin{pmatrix}
\int_{\alpha} \l \\
\int_{\beta} \l
\end{pmatrix}, \quad
\tilde\Lambda \equiv \begin{pmatrix}
\int_{\beta} \l \\
-\int_{\alpha} \l
\end{pmatrix} \,.
\ee
Plugging in the ansatz \eq{xiform} for $\xi$,
the equations \eq{oneformharm} boil down to
\eq{harmab}. We hence arrive at

\begin{corollary}
The map \eq{apxi} defines a bijection between
the space of doublet harmonic one-forms and
the space of semi-flat harmonic two-forms.
\end{corollary}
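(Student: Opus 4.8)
The plan is to obtain the bijection by exploiting the common parametrization of the two spaces that was set up just above. By the corollary to the preceding proposition, a semi-flat harmonic two-form is necessarily of the form \eqref{Xiform}, determined by a pair of complex functions $A(z,\bar z)$ and $B(z,\bar z)$, and the one-form it produces under \eqref{apxi} is then given by \eqref{xiform} with the \emph{same} coefficient functions. With this in hand, injectivity is immediate: the four doublet one-forms appearing in \eqref{xiform} — namely $\bigl(\int_\alpha\lambda,\int_\beta\lambda\bigr)^t dz$, $\bigl(\int_\alpha\bar\lambda,\int_\beta\bar\lambda\bigr)^t dz$, and their $d\bar z$ analogues — are pointwise linearly independent on $S$ because the fiber is non-degenerate there, so distinct pairs $(A,B)$ yield distinct $\xi$. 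Thus two linearly independent $\Xi$ cannot map to the same $\xi$.

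It then remains to show that the map is well-defined and surjective, and for this I would establish the two claims itemized before the statement. The first claim, that every doublet harmonic one-form has the shape \eqref{xiform}, handles the reconstruction step: since those four one-forms form a pointwise basis for the real one-forms on $S$ carrying the prescribed monodromy around each $B_i$, any doublet harmonic $\xi$ can be expanded as in \eqref{xiform} for some $A(z,\bar z)$, $B(z,\bar z)$, with the growth of $B$ at $z=\infty$ constrained exactly as in the accompanying footnote. Given such $\xi$, I would set $\Xi$ to be the two-form \eqref{Xiform} built from the same $(A,B)$; then the second claim — the equivalence of the two harmonicity conditions — guarantees both that this $\Xi$ is semi-flat harmonic and that, conversely, harmonicity of $\Xi$ forces harmonicity of the original $\xi$. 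Combining this with injectivity yields that \eqref{apxi} restricts to a linear isomorphism between the two solution spaces.

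The hard part will be the second claim, so I expect the algebraic equivalence of the harmonicity conditions to be the main obstacle. Here I would compute $d\Xi$ and $d*_{sf}\Xi$ directly for the ansatz \eqref{Xiform}, using the structure equation
\[
d\theta^w = -\frac{(\partial\lambda,\bar\lambda)}{(\lambda,\bar\lambda)}\,\theta^w\wedge\theta^z + \frac{(\partial\lambda,\lambda)}{(\lambda,\bar\lambda)}\,\theta^{\bar w}\wedge\theta^z,
\]
together with the dual formula $*\MM_{IJ}\xi^J = \bigl(\int_\beta *_{sf}\Xi,\,-\int_\alpha *_{sf}\Xi\bigr)^t$ from the preceding proposition. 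The goal is to show that the two-form conditions $d\Xi=0$ and $d*_{sf}\Xi=0$ collapse to the single first-order system \eqref{harmab} on $(A,B)$, namely $\partial_{\bar z}A=0$ together with the relation between $\partial_z\bar B$, $B$, and $\bar B$, plus complex conjugates. On the one-form side I would contract $d\xi^I$ and $d*\MM_{IJ}\xi^J$ against the pointwise basis of contravariant and covariant doublet scalars $\Lambda,\tilde\Lambda$ and their conjugates, so that the four scalar equations in \eqref{oneformharm} reduce, after substituting \eqref{xiform}, to the very same system \eqref{harmab}. The genuinely delicate point is the behavior at the puncture $z=\infty$, where these basis doublets degenerate: I would invoke the footnote's growth conditions on $A$ and $B$ to ensure that verifying the equations on $S\setminus\{z=\infty\}$ suffices and that no spurious solution is admitted at that point. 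Once both sets of conditions are seen to be equivalent to \eqref{harmab}, the bijection follows.
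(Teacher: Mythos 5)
Your proposal follows essentially the same route as the paper: injectivity from the pointwise linear independence of the four monodromy-respecting basis one-forms on $S$, surjectivity via the two itemized claims (every doublet harmonic one-form has the shape \eq{xiform}, and the two harmonicity conditions both reduce to the system \eq{harmab} after contracting against $\Lambda$, $\tilde\Lambda$ and their conjugates), with the same treatment of the puncture at $z=\infty$ via the footnote's growth conditions. No substantive differences to report.
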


\section{The Cohomology of Semi-flat
Harmonic Two-forms}\label{ap:cohomsf}

In this appendix, we provide a coordinate-invariant description of our
method of constructing smooth two-forms on $M$ from smooth
$\shH$-valued one-forms on $S$. More generally, we shall define 
linear mappings
\[
	A^k(S, \shH) \to A^{k+1}(M, \CC)
\]
that interchange the operator $\nabla$ --- induced by the connection on the holomorphic
vector bundle $\shH$ --- and the exterior derivative $d$ on $M$.

The construction is based on the following observation. Let $X = V / \Gamma$ be a
compact complex torus of dimension $g$; here $V \simeq T_0 X$ is a $g$-dimensional
complex vector space, and $\Gamma \simeq \pi_1(X, 0)$ is a lattice of rank $2g$ in
$V$. In particular, $\Gamma \tensor_{\ZZ} \RR = V$. By the Hurewicz theorem, $H_1(X,
\ZZ) \simeq \Gamma$; now the universal coefficients theorem shows that
\[
	H^1(X, \CC) \simeq \Hom_{\ZZ}(\Gamma, \CC) \simeq \Hom_{\RR}(V, \CC).
\]
The isomorphism works like this: given an $\RR$-linear functional $\varphi \colon V
\to \CC$, we get a closed one-form $d\varphi \in A^1(V, \CC)$; it is
translation-invariant, and therefore descends to a translation-invariant closed
one-form on $X$. Note that a closed form on a compact complex torus is
translation-invariant if and only if it is harmonic for the flat metric.

Now we return to our family of elliptic curves $f \colon M \to S$. As a complex
manifold, $M$ is isomorphic to a quotient $B / \BZ$, where $p \colon B \to S$ is a
holomorphic line bundle on $S$, and $\BZ \subseteq B$ is a one-dimensional complex
submanifold that intersects every fiber of $B$ in a lattice of rank $2$. It is easy
to see that $\shH$ is isomorphic to $\OS(\Bd)$, the sheaf of holomorphic sections of
the dual bundle. According to the discussion above, a section $\sigma \in \shHC(U)$
over an open subset $U \subseteq S$ gives rise to a smooth function
\[
	\varphisig \colon p^{-1}(U) \to \CC
\]
whose restriction to every fiber of $B$ is $\RR$-linear. Its derivative $d
\varphisig$ descends to a smooth one-form on $f^{-1}(U)$ that is closed, restricts to
a translation-invariant closed one-form on every fiber of $f$, and vanishes
identically on the preferred section of $f \colon M \to S$.

More generally, let $A^k(S, \shH)$ denote the space of smooth $k$-forms on 
$S$ with coefficients in the holomorphic vector bundle $\shH$. By applying the
construction from above on a suitable open covering --- consisting of
simply connected open sets on which $\shHC$ is trivial --- we obtain linear mappings 
\[
	A^k(S, \shH) \to A^{k+1}(M, \CC),
\]
with the property that the following diagram commutes:
\[
\begin{tikzcd}
A^0(S, \shH) \rar{\nabla} \dar & A^1(S, \shH) \rar{\nabla} \dar & A^2(S, \shH) \dar \\
A^1(M, \CC) \rar{d} & A^2(M, \CC) \rar{d} & A^3(M, \CC)
\end{tikzcd}
\]
Let us consider a smooth one-form $\xi \in A^1(S, \shH)$ with $\nabla \xi = 0$; it
goes to a closed two-form $\Xi \in A^2(M, \CC)$. By construction, the restriction of
$\Xi$ to the fibers of $f$ (and the section) is identically zero, and so $\Xi \in L^1
A^2(M, \CC)$ lies in the first step of the Leray filtration. We therefore get a
well-defined linear mapping
\[
	\ell \colon H^1(S, \shH) \to L^1 H^2(M, \CC).
\]
Because $\dim S = 1$, the Leray spectral sequence for $f \colon M \to S$ degenerates
at $E_2$; it follows that the edge mapping
\[
	\eps \colon L^1 H^2(M, \CC) \to H^1(S, \shH)
\]
is an isomorphism, since $L^2 H^2(M, \CC) \simeq H^2(S, \fl \CC)$ vanishes. One
can deduce from the construction above that the composition of the edge mapping
with $\ell$ is the identity. This implies that $\ell$ is also an isomorphism. 

\begin{proposition}\label{prop:H2}
The mapping $\ell$ takes the subspace 
\[
	H^1 \bigl( \Sb, \jl \shHC \bigr) \subseteq H^1(S, \shH)
\]
isomorphically to the image of $H^2(\Mb, \CC)_{\perp}$ in $H^2(M, \CC)$.
\end{proposition}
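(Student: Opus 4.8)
The plan is to deduce the statement from the functoriality of the Leray spectral sequence under the open inclusions, together with the Corollary already proved above. First I would record the two descriptions of the objects involved. On the source side, the subspace $H^1(\Sb, \jl \shHC) \subseteq H^1(S, \shH)$ is precisely the image of the injective restriction map $H^1(\Sb, \jl \shHC) \to H^1(S, \shHC)$ coming from the Leray spectral sequence of $j \colon S \into \Sb$; this is the inclusion appearing in the short exact sequence $0 \to H^1(\Sb, \jl \shHQ) \to H^1(S, \shHQ) \to \QQ^{24} \to 0$ of appendix~\ref{ap:sheafcohom}, after tensoring with $\CC$ and using $H^1(S, \shH) \simeq H^1(S, \shHC)$. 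On the target side, the ``image of $H^2(\Mb, \CC)_{\perp}$ in $H^2(M, \CC)$'' is by definition the image under the restriction map $\rho \colon H^2(\Mb, \CC) \to H^2(M, \CC)$ induced by $M \into \Mb$; since $H^2(\Mb,\CC)_{\perp} \subseteq L^1 H^2(\Mb, \CC)$ and $\rho$ preserves the Leray filtration, this image lands in $L^1 H^2(M, \CC)$, where $\ell$ and its inverse $\eps$ live.

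The key tool is that the open inclusions fit into a morphism of fibrations, with $\fb^{-1}(S) = M$, and therefore induce a morphism of Leray spectral sequences compatible with $\rho$. Concretely, I would set up the commuting square
\[
\begin{tikzcd}
L^1 H^2(\Mb, \CC) \rar{e_{\Mb}} \dar{\rho} & H^1(\Sb, \jl \shHC) \dar{\iota} \\
L^1 H^2(M, \CC) \rar{\eps} & H^1(S, \shH)
\end{tikzcd}
\]
in which the top arrow $e_{\Mb}$ is the edge map for $\fb$ --- well-defined because $R^1 \fbl \CC \simeq \jl \shHC$ by Cox--Zucker --- the right arrow $\iota$ is the restriction of cohomology classes from $\Sb$ to $S$, which under $(\jl \shHC)|_S = \shHC$ is exactly the inclusion of the previous paragraph, and the bottom arrow is the edge isomorphism $\eps$ of appendix~\ref{ap:cohomsf}, with $\ell = \eps^{-1}$. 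Commutativity is the naturality of the edge map, i.e. the identification of $\rho$ on $\mathrm{Gr}^1_L H^2$ with the coefficient-restriction map.

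With this square in place the argument is a short diagram chase. By the Corollary, $e_{\Mb}$ restricts to an isomorphism $H^2(\Mb, \CC)_{\perp} \xrightarrow{\ \sim\ } H^1(\Sb, \jl \shHC)$, its kernel being $L^2 H^2(\Mb, \CC) \simeq H^2(\Sb, \CC) = \CC$. Hence for every $x \in H^2(\Mb, \CC)_{\perp}$ one has $\eps(\rho(x)) = \iota(e_{\Mb}(x))$, and as $x$ ranges over $H^2(\Mb, \CC)_{\perp}$ the right-hand side ranges over all of $\iota\bigl(H^1(\Sb, \jl \shHC)\bigr)$, that is, over the subspace $H^1(\Sb, \jl \shHC) \subseteq H^1(S, \shH)$. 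Applying $\ell = \eps^{-1}$ then gives $\ell\bigl(H^1(\Sb, \jl \shHC)\bigr) = \rho\bigl(H^2(\Mb, \CC)_{\perp}\bigr)$, which is exactly the image in the statement. Since $\ell$ is injective, its restriction to $H^1(\Sb, \jl \shHC)$ is an isomorphism onto that image, as claimed --- and as a byproduct $\rho$ is forced to be injective on $H^2(\Mb, \CC)_{\perp}$, both spaces being $20$-dimensional.

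I expect the main obstacle to be the second paragraph: making the compatibility of the two Leray spectral sequences fully precise. One must verify that restricting classes along $M \into \Mb$ genuinely induces the stated map on the graded pieces of the Leray filtration, and that under the isomorphism $R^1 \fbl \CC \simeq \jl \shHC$ the resulting map $H^1(\Sb, \jl \shHC) \to H^1(S, \shH)$ coincides with the sheaf-theoretic inclusion of appendix~\ref{ap:sheafcohom} on the nose, rather than merely up to a scalar or an automorphism of $\shHC$. Once this naturality is nailed down, the remainder is formal.
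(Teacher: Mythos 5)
Your proof is correct and follows essentially the same route as the paper: both rest on the functoriality of the Leray spectral sequence for $M \into \Mb$ over $S \into \Sb$, the resulting commutative square relating the two edge maps, and the Corollary of appendix \ref{ap:sheafcohom} identifying $\epsb$ as an isomorphism. The only cosmetic difference is that the paper first checks injectivity of $H^2(\Mb,\CC)_{\perp} \to L^1 H^2(M,\CC)$ directly via the long exact sequence of the pair $(\Mb, M)$ (the kernel of restriction is spanned by the fiber class), whereas you recover that injectivity as a byproduct of the diagram chase.
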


\begin{proof}
Because all the fibers of $\fb \colon \Mb \to \Sb$ are irreducible, the long exact
sequence for the cohomology of the pair $(\Mb, M)$ shows that the kernel of the
restriction mapping
\[
	H^2(\Mb, \CC) \to H^2(M, \CC)
\]
is spanned by the class of a fiber. In particular, $H^2(\Mb, \CC)_{\perp}$ injects
into $L^1 H^2(M, \CC)$. Now the functoriality of the Leray spectral sequence gives us
the following commutative diagram:
\[
\begin{tikzcd}
H^2(\Mb, \CC)_{\perp} \rar[hook] \dar{\epsb} & L^1 H^2(M, \CC) \dar{\eps} \\
H^1 \bigl( \Sb, \jl \shHC \bigr) \rar[hook] & H^1(S, \shHC)
\end{tikzcd}
\]
It was shown in appendix \ref{ap:sheafcohom} that $\epsb$ is an isomorphism;
we also know that $\eps$ is an isomorphism. Because $\ell = \eps^{-1}$,
the assertion follows.
\end{proof}

To illustrate the meaning of this result, let us consider an $\SLZ$ doublet harmonic
one-form
\[
	\xi \in A^1(S, \shH)
\]
and the corresponding closed two-form $\Xi \in A^2(M, \CC)$. The cohomology class of
$\Xi$ lies in the image of the subspace $H^2(\Mb, \CC)_{\perp}$, but unless one knows
more about its behavior near the $24$ singular fibers, one cannot say whether $\Xi$
itself extends to a smooth closed form on $\Mb$. 
It would be interesting to understand this point better.

\end{document}